\newcommand{\maybe}[1]{}
\newcommand{\NP}{{\sf NP}}
\newcommand{\E}{{\sf E}}
\renewcommand{\P}{{\sf P}}
\newcommand{\OWF}{{\sf OWF}}
\newcommand{\PPT}{{\sf PPT}}
\newcommand{\remove}[1]{}
\newcommand{\desc}{{\Pi}}
\newcommand{\bit}{\{0,1\}}
\newtheorem{thm}{Theorem}[section]      
\newcommand{\BT}{\begin{thm}}   \newcommand{\ET}{\end{thm}}
\newtheorem{dfn}[thm]{Definition}      %
\newcommand{\BD}{\begin{dfn}}   \newcommand{\ED}{\end{dfn}}
\newtheorem{corr}[thm]{Corollary}      %
\newcommand{\BCR}{\begin{corr}} \newcommand{\ECR}{\end{corr}}
\newtheorem{Ithm}{Theorem}[section]     
\newcommand{\BIT}{\begin{Ithm}}   \newcommand{\EIT}{\end{Ithm}}
\newtheorem{lem}{Lemma}[section]  
\newcommand{\BL}{\begin{lem}}   \newcommand{\EL}{\end{lem}}
\newtheorem{prop}[lem]{Proposition}
\newcommand{\BP}{\begin{prop}}   \newcommand{\EP}{\end{prop}}
\newtheorem{clm}{Claim}         %
\newcommand{\BCM}{\begin{clm}}   \newcommand{\ECM}{\end{clm}}
\newtheorem{sclm}{SubClaim}[clm]           %
\newcommand{\BSCM}{\begin{sclm}}   \newcommand{\ESCM}{\end{sclm}}
\newtheorem{fact}[lem]{Fact}            %
\newcommand{\BF}{\begin{fact}}   \newcommand{\EF}{\end{fact}}
\renewenvironment{proof}{\noindent{\bf Proof:~~}}{\qed}
\newcommand{\BPF}{\begin{proof}} \newcommand {\EPF}{\end{proof}}
\newtheorem{prot}{Protocol}      
\newcommand{\BPR}{\begin{prot}}   \newcommand{\EPR}{\end{prot}}
\newenvironment{cproof}{\noindent{\bf Proof:~~}}{\hfill $\Box$}
\newcommand{\BCPF}{\begin{cproof}} \newcommand {\ECPF}{\end{cproof}}
\newtheorem{remark}[lem]{Remark}            %
\newcommand{\BR}{\begin{remark}}   \newcommand{\ER}{\end{remark}}
\newcommand{\BFT}{\begin{fact}}   \newcommand{\EFT}{\end{fact}}
\newcommand{\BDE}{\begin{description}}
\newcommand{\EDE}{\end{description}}
\newcommand{\BE}{\begin{enumerate}}
\newcommand{\EE}{\end{enumerate}}
\newcommand{\BI}{\begin{itemize}}
\newcommand{\EI}{\end{itemize}}
\newcommand{\BEQ}{\begin{eqnarray*}}
\newcommand{\EEQ}{\end{eqnarray*}}
\def\blackslug
\def\qed{\quad\blackslug\lower 8.5pt\null\par}
\newcommand{\cD}{{\cal D}}
\renewcommand{\H}{{\cal H}}
\newcommand{\A}{{\cal A}}
\newcommand{\cH}{{\cal H}}
\newcommand{\cR}{{\cal R}}
\renewcommand{\S}{{\cal S}}
\newcommand{\U}{{\cal U}}
\newcommand{\cV}{{\cal V}}
\newcommand{\N}{\mathbb{N}}
\newcommand{\poly}{\mathsf{poly}}
\newcommand{\SD}{\mathsf{SD}}
\newenvironment{boxfig}[2]{\begin{figure}[#1]\fbox{\begin{minipage}{\columnwidth}
                        \vspace{0.2em}
                        \makebox[0.025\columnwidth]{}
                        \begin{minipage}{0.95\columnwidth}
            {\small{
                        #2 }}
                        \end{minipage}
                        \vspace{0.2em}
                        \end{minipage}}}{\end{figure}}
\newcommand{\bitset}{\bit}
\date{}
\begin{document}

\title{On One-way Functions and Kolmogorov Complexity\\
}

\author{Yanyi Liu\\Cornell University\\ \texttt{yl2866@cornell.edu}
\and Rafael Pass\thanks{Supported in part by NSF Award SATC-1704788, NSF Award
    RI-1703846, AFOSR Award FA9550-18-1-0267, and a JP Morgan Faculty Award.
This research is based upon work supported in part by the Office of the Director of National Intelligence (ODNI), Intelligence Advanced Research Projects Activity (IARPA), via 2019-19-020700006.  The views and conclusions contained herein are those of the authors and should not be interpreted as necessarily representing the official policies, either expressed or implied, of ODNI, IARPA, or the U.S. Government.  The U.S. Government is authorized to reproduce and distribute reprints for governmental purposes notwithstanding any copyright annotation therein.}\\Cornell Tech \\ \texttt{rafael@cs.cornell.edu}}
\date{\today}
\maketitle

\begin{abstract}
  \noindent We prove that  the equivalence of two fundamental
problems in the theory of computing.
For every polynomial $t(n)\geq (1+\varepsilon)n, \varepsilon>0$, the
following are equivalent:
\BI
 \item One-way functions exists (which in turn is equivalent to the existence of secure private-key
  encryption schemes, digital signatures, pseudorandom generators, pseudorandom functions, commitment schemes, and more);
\item $t$-time bounded Kolmogorov Complexity, $K^t$, is mildly
  hard-on-average (i.e., there exists a polynomial $p(n)>0$ such that no $\PPT$ algorithm can compute $K^t$, for more than a $1-\frac{1}{p(n)}$ fraction of $n$-bit strings).
\EI
In doing so, we present the first natural, and well-studied, computational problem
characterizing the feasibility of the central private-key primitives and protocols in Cryptography.
\end{abstract}
\thispagestyle{empty}
\newpage
\setcounter{page}{1}
\section{Introduction}
We prove the equivalence of two fundamental
problems in the theory of computing: (a) the existence of one-way
functions, and (b) mild average-case hardness of the time-bounded Kolmogorov Complexity problem.

\begin{description}
  \item [Existence of One-way Functions:]
A \emph{one-way function} \cite{DH76} (OWF) is a function $f$ that can be efficiently computed
(in polynomial time), yet no probabilistic polynomial-time ($\PPT$)
algorithm can invert $f$ with inverse polynomial probability
for infinitely many input lengths $n$.
Whether one-way functions exist is unequivocally the most important
open problem in Cryptography (and arguably the most importantly open
problem in the theory of computation, see e.g., \cite{levin03}): OWFs are both necessary
\cite{ImpagliazzoL89} and sufficient for
many of the most central cryptographic primitives and protocols (e.g., pseudorandom generators
\cite{BM88,HILL99}, pseudorandom functions \cite{GGM84}, private-key encryption
\cite{GM84}, digital signatures \cite{Rompel90}, commitment
schemes \cite{Naor91}, identification protocols \cite{FS90},
coin-flipping protocols \cite{Blum82}, and more). These primitives and
protocols are often referred to as \emph{private-key
  primitives}, or ``Minicrypt'' primitives \cite{Imp95} as they
exclude the notable task of public-key encryption \cite{DH76,RSA}.
Additionally, as observed by Impagliazzo \cite{G89,Imp95}, the existence of a OWF is equivalent to the existence of polynomial-time method
for sampling hard \emph{solved} instances for an $\NP$ language (i.e.,
hard instances together with their witnesses).
%

While many candidate constructions of OWFs are
known---most notably based on factoring \cite{RSA}, the discrete logarithm
problem \cite{DH76}, or the hardness of lattice problems
\cite{Ajtai96}---the question of whether there exists some \emph{natural}
average-case hard problem that characterizes the hardness of OWFs (and
thus the feasibility of the above central cryptographic primitives) has been a long-standing open problem:\footnote{Note that Levin
  \cite{Levin85} presents an ingenious construction of a \emph{universal one-way function}---a function that
  is one-way if one-way functions exists. But his construction (which
  relies on an enumeration argument) is artificial. Levin
  \cite{levin03} takes a step towards making it less artificial by
  constructing a universal one-way function based on a
  new specially-tailored {\em Tiling Expansion problem}.}
\begin{quote}
  \emph{Does there exists some \emph{natural} average-case hard
  computational problem (i.e., both the computational problem and the
  distribution over instances is ``natural''), which characterizes the
  existence of one-way functions?}
\end{quote}
This problem is particularly pressing given recent advances in quantum
computing \cite{google} and the fact that many classic OWF
candidates (e.g., based on factoring and discrete log) can be broken by a quantum computer \cite{Shor97}.

\item [Average-case Hardness of $K^{\poly}$-Complexity:]
What makes the string $12121212121212121$ less random
than $60484850668340357492$? The notion of {\em Kolmogorov complexity}
($K$-complexity), introduced by Solo\-monoff \cite{SOLOMONOFF19641}, Kolmogorov
\cite{Kolmogorov} and Chaitin \cite{Chaitin69a}, provides an elegant method for measuring the amount
of ``randomness'' in individual strings: The $K$-complexity of a
string is the length of the shortest program (to be run on some fixed universal Turing
machine $U$) that outputs the string $x$. From a computational point
of view, however, this notion is unappealing as there is no
efficiency requirement on the program. The notion of
{\em $t(\cdot)$-time-bounded Kolmogorov Complexity ($K^t$-complexity)}
overcomes this issue: $K^t(x)$ is defined as the length
of the shortest program that outputs the string $x$ within time $t(|x|)$. As surveyed
by Trakhtenbrot \cite{T84}, the problem of efficiently determining the
$K^t$-complexity for $t(n) = \poly(n)$ predates the theory of $\NP$-completeness
and
was studied in the Soviet Union since the 60s as a candidate for a
problem that requires ``brute-force search'' (see Task 5 on page 392 in \cite{T84}). The modern
complexity-theoretic study of this problem goes back to Sipser
\cite{Sipser83}, Ko
\cite{Ko86} and Hartmanis \cite{Hartmanis}.

Intriguingly, Trakhtenbrot
also notes that a ``frequential'' version of this problem was
considered in the Soviet Union in the 60s: the problem of finding an algorithm that succeeds
for a ``high'' fraction of strings $x$---in more modern terms from the
theory of average-case complexity \cite{L86}, whether
$K^t$ can be computed by a heuristic algorithm with inverse polynomial
error, over random inputs $x$. We say that $K^t$ is
{\em mildly hard-on-average (mildly HoA)} if there exists some polynomial
$p(\cdot)>0$ such that
every $\PPT$ fails in
computing $K^t(\cdot)$ for at least a $\frac{1}{p(\cdot)}$
fraction of $n$-bit
strings $x$ for all sufficiently large $n$,
and that $K^{\poly}$ is
mildly HoA if there exists some polynomial $t(n) > 0$
such that $K^t$ is mildly HoA.
\end{description}
Our main result shows that the existence of OWFs is equivalent to
mild average-case hardness of $K^{\poly}$. In doing so, we resolve the
above-mentionned open problem, and
present the first natural (and well-studied) computational problem, characterizing the feasibility of the central private-key primitives
in Cryptography.
\noindent
\BT \label{main} The following are equivalent:
\BI
\item One-way functions exist;
\item $K^{\poly}$ is
   mildly hard-on-average.
\EI
\ET
\noindent In other words,
      \begin{quote}
        \emph{Secure private-key encryption, digial dignatures,
          pseudorandom generators, pseudorandom functions, commitment
          schemes, etc., are possible
          iff
          $K^{\poly}$-complexity is mildly hard-on-average.}
      \end{quote}

      In fact, our main theorem is stronger than stated: we show that
      for \emph{every} polynomial $t(n) \geq (1+\varepsilon) n$, where
      $\varepsilon>0$ is a constant, mild average-case hardness
      of $K^{t}$ is equivalent to the existence of one-way functions.

\paragraph{On the Hardness of Approximating $K^{\poly}$-complexity}
Our connection between OWFs and $K^t$-complexity has direct implications to the theory of $K^t$-complexity.
Trakhtenbrot \cite{T84} also discusses average-case hardness
of the \emph{approximate} $K^t$-complexity problem: the problem of,
given a random $x$, outputting an ``approximation'' $y$ that is
$\beta(|x|)$-close to $K^t(x)$ (i.e., $|K^t(x) - y | \leq \beta(|x|)$).
He observes that there is a trivial heuristic approximation algorithm that
succeeds with probability approaching 1 (for large enough $n$): Given
$x$, simply
output $|x|$. In fact, this trivial algorithm produces a $(d \log
n)$-approximation with probability $\geq 1 -
\frac{1}{n^d}$ over random $n$-bits strings.\footnote{At most $2^{n-d \log n}$ out of $2^n$ strings
  have $K^t$-complexity that is smaller than $n-d \log n$.}
We note that our proof that OWFs imply mild average-case hardness of $K^{\poly}$
actually directly extends to show that $K^{\poly}$ is mildly-HoA
also to $(d \log n)$-approximate.
We thus
directly get:
\BT If $K^{\poly}$ is
  mildly hard-on-average, then for every constant $d$, $K^{\poly}$ is mildly hard-on-average to $(d \log n)$-approximate.
\ET
In other words, any efficient algorithm that only slightly beats the
success probability of the ``trivial" approximation algorithm, can be
used to break OWFs.

\paragraph{Existential v.s. Constructive $K^t$ complexity}
Trakhtenbrot \cite{T84} considers also ``constructive" variant of the
$K^t$-complexity problem, where the task of the solver is to, not only
determine the $K^t$-complexity of a string $x$, but to also output a
minimal-length program $\desc$ that generates $x$. We remark that for our proof that mild
average-case hardness of $K^{\poly}$ implies OWFs, it actually suffices
to assume mild average-case hardness of the ``constructive"
$K^{\poly}$ problem, and thus we obtain an equivalence between the
``existential" and ``constructive" versions of the problem in the
average-case regime.

\paragraph{On Decisional Time-Bounded Kolmogorov Complexity Problems}
We finally note that our results also show an equivalence between one-way
functions and mild average-case hardness of a \emph{decisional}
$K^{\poly}$ problem: Let $\mathsf{MINK}^{t}[s]$ denote the set of strings $x$ such
that $K^{t(|x|)}(x) \leq s(|x|)$. Our proof directly shows that there
exists some constant $c$ such that for every constant $\varepsilon>0$,
every $t(n) \geq (1+\varepsilon) n$, and letting $s(n) = n - c\log n$, mild
average-case hardness of the language $\mathsf{MINK}^{t}[s]$ (with respect to
the uniform distribution over instances) is equivalent the existence
of one-way functions.

\subsection{Related Work}
We refer the reader to Goldreich's textbook \cite{Gol01} for more
context and applications of OWFs (and complexity-based cryptography in
general); we highly recommend Barak's survey on candidate
constructions of one-way functions \cite{Barak17}.
We refer the reader to the textbook of Li and
Vitanyi \cite{mingvitany} for more context and applications of
Kolmogorov complexity; we highly recommend Allender's
surveys on the history, and recent applications, of notions of
time-bounded Kolmogorov complexity \cite{Allender20,Allender20b,Allender17}.

\paragraph{On Connections between $K^{\poly}$-complexity and
  OWFs}
We note that some (partial) connections between $K^t$-complexity and
OWFs already existed in the literature:
\BI
\item Results by Kabanets and Cai \cite{MCSP} and
Allender et al \cite{ABK+06} show that the existence of OWFs implies
that $K^{\poly}$ must be \emph{worst-case} hard to compute; their
results will be the starting point for our result that OWFs also imply
\emph{average-case hardness} of $K^{\poly}$.
\item Allender and Das \cite{AllenderD17} show that every problem in
  ${\bf SZK}$ (the class of promise problems having statistical
  zero-knowledge proofs \cite{GMR89}) can be solved in probabilistic polynomial-time
  using a $K^{\poly}$-complexity oracle. Furthermore, Ostrovsky and
  Wigderson \cite{Ostrovsky91, OW93} show that if ${\bf SZK}$ contains
  a problem that is hard-on-average,
  then OWFs exist. In contrast, we show the existence of
  OWFs assuming only that $K^{\poly}$ is hard-on-average.
  \item A very recent elegant work by Santhanam \cite{Santhanam} is also explicitly motivated
    by the above-mentionned open problem, and presents an intruiging connection
    between one-way functions and error-less average-case hardness of the \emph{circuit
minimization problem} (MCSP) \cite{MCSP}---i.e., the problem of, given a truth table of a
boolean function, determining the size of the smallest circuit that
computes the function; the MCSP problem is closely related to the time-bounded
Kolmogorov complexity problem \cite{T84,ABK+06}. Santhanam proves
equivalence between OWFs and errorless average-case hardness of MCSP under a new (and somewhat
complicated) conjecture that he introduces. We emphasize that, in contrast, our
equivalence is unconditional.
  \EI

  \paragraph{On Worst-case to Average-case Reductions for
    $K^{\poly}$-complexity}
  We highlight a very elegant recent result by Hirahara
  \cite{Hirahara18} that presents a worst-case (approximation) to
  average-case reduction
  for $K^{\poly}$-complexity. Unfortunately, his result only gives average-case hardness w.r.t. \emph{errorless
    heuristics}---namely, heuristics that always provide either the
  correct answer or output $\bot$ (and additionally only output $\bot$
  with small probability). For our construction of a OWF, however, we require
  average-case hardness of $K^t$ also with respect to heuristics that may
  err (with small probability). Santhanam \cite{Santhanam},
  independently, obtains a similar result for a related problem.

    Hirahara notes that it is an open problem to obtain a
  worst-case to average-case reduction for $K^{\poly}$ w.r.t. heuristics that may
  err.
  Let us emphasize that average-case hardness w.r.t. errorless heuristics
  is a much weaker property that just ``plain'' average-case hardness
  (with respect to heuristics that may err): Consider a random 3SAT
  formula on $n$ variables with $1000n$ clauses. It is well-known that,
  with high probability, the formula is not be satisfiable.
  Thus, there is a trivial heuristic algorithm for solving 3SAT on
  such random instances by simply outputting ``No''.
  Yet, the question of whether there exists an
  efficient \emph{errorless} heuristic for this problem is still open, and
  non-existence of such an algorithm is implied by Feige's
  Random 3SAT conjecture \cite{Feige02}.

\paragraph{On Universal Extrapolation}
Impagliazzo and Levin~\cite{ImpagliazzoL90} consider a problem of
\emph{universal extrapolation}: Roughly speaking,
extrapolation with respect to some polynomial-time Turing machine $M$ requires, given some prefix string
$x_{pre}$, sampling a random continuation $x_{post}$ such that $M$ (on input a
random tape) generates $x_{pre}||x_{post}$. Universal extrapolation is said to be possible if
\emph{all} polynomial-time Turing machines can be extrapolated.
Impagliazzo and Levin demonstrate the equivalence of one-way
functions and the infeasibility of universal extrapolation.

As suggested by an anonymous FOCS reviewer, universal extrapolation
seems related to time-bounded Kolmogorov complexity:
Extrapolation with respect to a \emph{universal}
Turing machine should, intuitively, be equivalent to \emph{approximating} $K^{\poly}$ (for random string $x$) by counting the
number of possible continuations $x_{post}$ to a prefix $x_{pre}$ of
$x$: Strings with small $K^{\poly}$-complexity should have many
possible continuation, while strings with large $K^{\poly}$-complexity
should have few.

While this method may perhaps be used to obtain an
alternative proof of one direction (existence of one-way function from
hardness of $K^{\poly}$) of our main theorem, as far as we
can tell, the actual proof is non-trivial and would result in a
significantly weaker conclusion than what we obtain: It would only show that average-case hardness of
\emph{approximating} $K^{\poly}$ implies infeasibility of universal
extrapolation and thus one-way functions, whereas we show that even
average-case hardness of \emph{exactly} computing $K^{\poly}$ implies
the existence of one-way functions.

For the converse direction, the infeasibility of universal
extrapolation only means that there exists \emph{some} polynomial-time Turing
machine $M$ that is hard to extrapolate, and this $M$ is not
necessarily a universal Turing machine. It is not \emph{a-priori}
clear whether infeasibility of extrapolation w.r.t. some $M$ implies
infeasibility of extrapolation w.r.t. a \emph{universal} Turing
machine.

A direct corollary of our main theorem is a formal connection between universal
extrapolation and average-case hardness of $K^{\poly}$:
Infeasibility of universal extrapolation is
equivalent to mild average-case hardness of $K^{\poly}$ (since by
\cite{ImpagliazzoL90}, infeasibility of universal extrapolation is
equivalent to the existence of one-way functions).

\subsection{Proof outline}
We provide a brief outline for the proof of Theorem \ref{main}.

\paragraph{OWFs from Avg-case $K^{\poly}$-Hardness}
We show that if $K^t$ is mildly average-case hard for some polynomial
$t(n)>0$,
then a
weak one-way function exists\footnote{Recall that an efficiently
  computable function $f$ is a weak OWF if there exists some
  polynomial $q>0$ such that $f$ cannot be efficiently inverted with
  probability better than $1-\frac{1}{q(n)}$ for sufficiently large $n$.}; the existence of (strong) one-way
functions then follows by Yao's hardness amplification theorem
\cite{Yao82}. Let $c$ be a constant such that every string $x$ can be
output by a program of length $|x| + c$ (running on the fixed Universal Turing
machine $U$). Consider the function $f(\ell || \desc')$, where $\ell$
is a bitstring of
length $\log (n+c)$ and $\desc'$ is a bitstring of length $n+c$, that lets $\desc$ be the first
$\ell$ bits of $\desc'$, and outputs $\ell||y$ where $y$ is the output
generated by running the program $\desc$\footnote{Formally, the
  program/description $\Pi$ is an encoding of a pair $(M,w)$ where $M$ is a
  Turing machine and $w$ is some input, and we evaluate $M(w)$ on the
  Universal Turing machine $U$.} for
$t(n)$ steps.\footnote{We remark that although our construction of the function $f$ is
somewhat reminiscent of Levin's construction of a universal OWF, the actual function (and even more so the analysis) is
actually quite different. Levin's function $\hat{f}$, roughly speaking, parses
the input into a Turing machine $M$ of length $\log n$ and an input $x$ of length
$n$, and next outputs $M(x)$. As he argues, if a OWF $f'$ exists, then with
probability $\frac{1}{n}$, $\hat{f}$ will compute output $f'(x)$ for a
randomly selected $x$, and is thus hard to invert. In contrast, in our candidate OWF construction, the key
idea is to \emph{vary the length} of a ``fully specified'' program $\desc$
(including an input).}

We aim to show that if $f$ can be inverted with high
probability---significantly higher than $1-1/n$---then $K^t$-complexity
of \emph{random strings} $z \in \{0,1\}^n$ can be computed with high
probability. Our heuristic $\H$, given a string $z$, simply tries to
invert $f$ on $\ell||z$ for all $\ell \in [n+c]$, and outputs the
smallest $\ell$ for which inversion succeeds.\footnote{Or, in case, we
  also want to break the ``constructive'' $K^{\poly}$ problem, we also output
  the $\ell$-bit truncation of the program $\desc'$ output by the inverter.} First, note that since
every length $\ell \in [n+c]$
is selected with probability $1/(n+c)$, the
inverter must still succeed with high probability even if we condition the output of the
one-way function on any particular length $\ell$ (as we assume that
the one-way function inverter fails with probability significantly
smaller than $\frac{1}{n}$). This,
however, does not suffice to prove that the heuristic works with high
probability, as the string $y$ output by the one-way function is not
uniformly distributed (whereas we need to compute the $K^t$-complexity
for uniformly chosen strings). But, we show using a simple counting
argument that $y$ is not too ``far'' from uniform in relative distance. The key idea is
that for every string $z$ with $K^t$-complexity $w$, there exists some
program $\desc_z$ of length $w$ that outputs it; furthermore, by our
assumption on $c$, $w \leq n+c$. We thus have that $f(\U_{n+c+\log(n+c)})$ will output $w||z$ with
probability at least $\frac{1}{n+c} \cdot 2^{-w} \ge \frac{1}{n+c}
\cdot 2^{-(n+c)} = \frac{2^{-n}}{O(n)} $ (we need to pick the
right length, and next the right program). So, if the
heuristic fails with probability $\delta$, then the one-way function
inverter must fail with probability at least $\frac{\delta}{O(n)}$,
which leads to the conclusion that $\delta$ must be small (as we assumed the inverter
fails with probability significantly smaller than $\frac{1}{n}$).

\paragraph{Avg-case $K^{\poly}$-Hardness from EP-PRGs}
To show the converse direction, our starting point is the earlier
result by Kabanets and Cai \cite{MCSP} and Allender et al \cite{ABK+06} which shows that the
existence of OWFs implies that $K^t$-complexity, for every sufficiently
large polynomial $t(\cdot)$, must be \emph{worst-case}
hard to compute. In more detail, they show that if $K^t$-complexity can
be computed in polynomial-time for \emph{every} input $x$, then
pseudo-random generators (PRGs) cannot exist (and PRGs are implied by
OWF by \cite{HILL99}). This follows from the
observations that (1) random strings have high $K^t$-complexity with
overwhelming probability, and (2) outputs of a PRG always have
small $K^t$-complexity as long as $t(n)$ is sufficiently greater than the running
time of the PRG (as the seed plus the constant-sized description
of the PRG suffice to compute the output). Thus,
using an algorithm that computes $K^t$, we can easily distinguish
outputs of the PRG from random strings---simply output 1 if the
$K^t$-complexity is high, and 0 otherwise. This method, however,
relies on the algorithm working for \emph{every} input. If we only have
access to a heuristic $\H$ for $K^t$, we have no guarantees that $\H$
will output a correct value when we feed it a pseudorandom string, as
those strings are \emph{sparse} in the universe of all
strings.\footnote{We note that, although it was not explictly pointed
  out, their argument actually also extends to show that $K^t$ does
  not have an \emph{errorless} heuristic assuming the existence of
  PRGs. The point is that even on
  outputs of the PRG, an errorless heuristic must output either a
  small value or $\bot$ (and perhaps always just output $\bot$). But
  for random strings, the heuristic can only output $\bot$ with small
  probability. Dealing with heuristics that may err will be
  more complicated.}

To overcome this issue, we introduce the concept of an
\emph{entropy-preserving PRG (EP-PRG)}. This is a PRG that expands the seed by
$O(\log n)$ bits, while ensuring that the output of the PRG loses at
most $O(\log n)$ bits of \emph{Shannon entropy}---it will be important
for the sequel that we rely on Shannon entropy as opposed to min-entropy.
In essence, the PRG preserves (up to an additive term of $O(\log n)$)
the entropy in the seed $s$. We next show that any good heuristic $\H$
for $K^t$ can break such an EP-PRG. The key point is that since the
output of the PRG is entropy preserving, by an averaging argument,
there exists a $1/n$ fraction of ``good'' seeds $S$ such that, conditioned on
the seed belonging to $S$, the output
of the PRG on input seeds of length $n$ has \emph{min-entropy} $n-O(\log n)$. This means that the probability that $\H$ fails
to compute $K^t$ on output of the PRG, conditioned on picking a ``good''
seed, can increase at most by a
factor $poly(n)$. We conclude that $\H$
can be used to determine (with sufficiently high probability) the $K^t$-complexity for both random strings and for outputs of the PRG.

\paragraph{EP-PRGs from Regular OWFs}
We start by noting that the standard Blum-Micali-Goldreich-Levin \cite{BM84,GL89}
PRG construction from one-way \emph{permutations} is entropy
preserving. To see this, recall the construction:
$$G_f(s,h_{GL}) = f(s) || h_{GL} || h_{GL}(s)$$
where $f$ is a one-way
permutation and $h_{GL}$ is a hardcore function for $f$---by
\cite{GL89}, we can select a random hardcore function $h_{GL}$ that output $O(\log n)$
bits.
Since $f$ is a permutation, the output of the
PRG fully determines the input and thus there is actually no entropy
loss. We next show that the PRG construction of \cite{GKL93,HILL99,Gol01,Yu} from \emph{regular} OWFs also
is an EP-PRG. We refer to a function $f$ as being $r$-regular if for every
$x \in \{0,1\}^*$, $f(x)$ has between $2^{r(|x|)-1}$ and $2^{r(|x|)}$ many preimages.
Roughly speaking, the construction
applies pairwise independent hash functions (that act as strong
extractors) $h_1,h_2$ to both the input and output of
the OWF (parametrized to match the regularity $r$) to
``squeeze'' out randomness from both the input and the output, and
finally also applies a hardcore function that outputs $O(\log n)$
bits:
\begin{equation}
\label{GKL.eq}
  G_f^r(s||h_1||h_2||h_{GL})=h_{GL}|| h_1||h_2||[h_1(s)]_{r-O(\log n)} || [h_2(f(s))]_{n-r-O(\log n)} ||
  h_{GL}(s),
  \end{equation}
where $[a]_j$ means $a$ truncated to $j$ bits.
As already shown in \cite{Gol01} (see also \cite{Yu}), the output of the
function excluding the hardcore bits is actually
$1/\poly(n)$
-close to
uniform in statistical distance (this follows directly from the
Leftover Hash Lemma \cite{HILL99}), and this implies (using an
averaging argument) that the Shannon entropy of the output is at least $n-O(\log
n)$, thus the construction is an EP-PRG.
We finally note that this construction remains both secure and
entropy preserving, even if the
input domain of the function $f$ is not $\{0,1\}^n$, but rather
\emph{any} set $S$ of size $2^n/n$; this will be useful to us shortly.

\paragraph{Cond EP-PRGs from Any OWFs}
Unfortunately, constructions of PRGs from OWFs
\cite{HILL99,Hol06,HHR06,HRV10} are not entropy preserving as far as
we can tell. We, however, remark that to prove that $K^t$ is
mildly HoA, we do not
actually need a ``full-fledged'' EP-PRG: Rather, it suffices to have
what we refer to as a \emph{conditionally-secure} EP-PRG $G$: a conditionally-secure EP-PRG (cond EP-PRG) is an
efficiently computable function $G$ having the property that there
exists some event $E$ such that:
\begin{enumerate}
   \item $G(\U_{n'}\mid E)$ has Shannon entropy $n'-O(\log n')$;
   \item $G(\U_{n'} \mid E)$ is indistinguishable from $\U_{m}$ for some
     $m \geq n' + O(\log n')$.
  \end{enumerate}
  In other words, there exists some event $E$ such that conditionned
  on the event $E$, $G$ behaves likes an EP-PRG.
    We next show how to adapt the above construction to yield a cond EP-PRG
    from any OWF $f$. Consider $G(i||s||h_1,h_2,h_{GL}) = G_f^i(s,
    h_1,h_2,h_{GL})$ where $|s|=n$, $|i|=\log n$, and $G_f^i$ is the
    PRG construction defined in equation \ref{GKL.eq}.
We remark that for any function $f$, there exists some regularity
$i^*$ such that at least a fraction $1/n$ of inputs $x$ have
regularity $i^*$. Let $S_{i^*}$ denote the set of these
$x$'s. Clearly, $|S_{i^*}| \geq 2^n/n$; thus, by the above argument,
$G_f^{i^*}(\U_{n'} \mid  S_{i^*})$ is both
pseudorandom and has entropy
$n'-O(\log n'$).
Finally, consider the event $E$ that $i = i^*$ and $s \in S_{i^*}$. By
definition,
$G(\U_{\log n} ||\U_{n} ||\U_{m} \mid E )$
is identically distributed to
$G_f^{i^*}(\U_{n'} \mid S_{i^*})$, and thus $G$ is a cond EP-PRG from any OWF.
For clarity, let us provide the full expanded description of the cond
EP-PRG $G$:
$$G(i||s||h_1||h_2||h_{GL})  = h_{GL}|| h_1||h_2||[h_1(s)]_{i-O(\log n)} || [h_2(f(s))]_{n-i-O(\log
  n)} || h_{GL}(s)$$
Note that this $G$ is \emph{not} a PRG: if the input $i\neq i^*$
(which happens with probability $1-\frac{1}{n}$), the output of $G$ may
not be pseudorandom! But, recall that the notion of a \emph{cond}
EP-PRG only requires the output of $G$ to be pseudorandom
\emph{conditioned} on some event $E$ (while also being entropy
preserving conditioned on the same event $E$).

Finally, the above outline only shows that $K^t$ is mildly HoA if
$t(\cdot)$ is larger than running time of the cond EP-PRG that we
constructed; that is, so far, we have only shown that OWFs imply that
$K^t$ is mildly HoA for some polynomial $t$. To prove that this holds
for every $t(n)\geq (1+\varepsilon) n$, $\varepsilon>0$, we remark that using
a padding trick, we can also construct a cond EP-PRG that can be
computed in time $n + O(n^{\alpha})$, where $\alpha<1$---we refer to
  this as a \emph{rate-1 efficient PRG}. Using such a rate-1 efficient
  cond EP-PRG, we can show that $K^t$ is mildly HoA for every
  $t(n)\geq (1+\varepsilon) n$, $\varepsilon>0$.

\section{Preliminaries}
We assume familiarity with basic concepts such as Turing machines,
polynomial-time algorithms and
probabilistic polynomial-time algorithms ($\PPT$).
A function $\mu$ is said to be \emph{negligible} if for every
polynomial $p(\cdot)$ there exists some $n_0$ such that for all $n >
n_0$, $\mu(n) \leq \frac{1}{p(n)}$.
A {\em probability ensemble} is a sequence of random variables
$A=\{A_n\}_{n\in \N}$. We let $\U_n$ the uniform distribution over $\{0,1\}^n$.

\subsection{One-way Functions}
We recall the definition of one-way functions \cite{DH76}. Roughly speaking, a
function $f$ is one-way if it is polynomial-time computable, but hard to
invert for $\PPT$ attackers. 

\BD\label{def:owf} Let $f: \bitset^* \rightarrow \bitset^*$ be a polynomial-time
computable function. $f$ is said to be a \emph{one-way function (OWF)} if for every $\PPT$
algorithm $\A$, there exists a negligible function $\mu$ such that for
all $n \in \N$,
	$$ \Pr[x \leftarrow \bitset^n; y = f(x) : A(1^n,y) \in f^{-1}(f(x)) ] \leq \mu(n) $$
        \ED

        We may also consider a weaker notion of a \emph{weak one-way
        function} \cite{Yao82}, where we only require all $\PPT$ attackers to fail
        with probability noticeably bounded away from 1:

        \BD\label{def:owf} Let $f: \bitset^* \rightarrow \bitset^*$ be a polynomial-time
computable function. $f$ is said to be a \emph{$\alpha$-weak one-way
  function ($\alpha$-weak OWF)} if for every $\PPT$
algorithm $\A$, for all sufficiently large $n \in N$,
	$$ \Pr[x \leftarrow \bitset^n; y = f(x) : A(1^n,y) \in f^{-1}(f(x)) ] < 1-\alpha(n) $$
We say that $f$ is simply a \emph{weak one-way function (weak OWF)} if
there exists some polynomial $q>0$ such that $f$ is a
$\frac{1}{q(\cdot)}$-weak OWF.
\ED

Yao's hardness amplification theorem \cite{Yao82} shows that any weak
OWF can be turned into a (strong) OWF.
\BT [\cite{Yao82}]
Assume there exists a weak one-way function. Then there exists a
one-way function.
\ET

\subsection{Time-bounded Kolmogorov Complexity}
Let $U$ be some fixed Universal Turing machine that can emulate any Turing machine $M$ with polynomial overhead. Given a
description $\desc \in \{0,1\}^*$ which encodes a pair $(M,w)$ where
$M$ is a (single-tape) Turing
machine and $w \in \{0,1\}^*$ is an input, let $U(\desc, 1^t)$ denote the
output of $M(w)$ when emulated on $U$ for $t$ steps. Note that (by
assumption that $U$ only has polynomial overhead)
$U(\desc,1^t)$ can be computed in time $\poly(d,t)$.

The \emph{$t$-time bounded Kolmogorov Complexity, $K^t(x)$, of a string
$x$} \cite{Kolmogorov,Sipser83,T84,Ko86} is defined as the length of the shortest description $\desc$
such that $U(\desc,1^t) = x$:
$$K^t(x) = \min_{\desc \in \{0,1\}^*}\{|\desc|:U(\desc, 1^{t(|x|)})=x\}.$$
A central fact about $K^t$-complexity is that the length of a
string $x$ essentially (up to an additive constant) bounds the
$K^t$-complexity of the string for every $t(n)>0$
\cite{SOLOMONOFF19641,Kolmogorov, Chaitin69a}
(see e.g., \cite{Sipser96} for simple treatment). This follows by considering $\desc =
(M,x)$ where $M$ is a constant-length Turing machine that directly
halts; consequently, $M$ simply outputs its input and thus $M(x) = x$.
\begin{fact} \label{KC.fact}
There exists a constant $c$ such that for every function
$t(n)>0$ and every $x \in \{0,1\}^*$ it holds that $K^t(x) \leq |x|+c$.
\end{fact}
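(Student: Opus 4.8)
The plan is to exhibit, for an arbitrary string $x$, a single short description that regenerates $x$ within \emph{any} positive time bound, and to check that its length exceeds $|x|$ only by a universal additive constant. The natural candidate is exactly the description hinted at right before the statement: the encoding of the pair $(M^\ast, x)$, where $M^\ast$ is a fixed ``copy'' machine that halts immediately with its input still available as output.

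First I would fix $M^\ast$ to be the constant-size Turing machine that performs no transitions, so that on any input $w$ it halts at once; under the (standard) convention that a machine's output is read off its tape, $M^\ast(w)=w$. Since $M^\ast$ halts within one step, for every $x$ and every function $t(\cdot)$ with $t(|x|)>0$ we get $U(\desc,1^{t(|x|)})=M^\ast(x)=x$ for $\desc$ the encoding of $(M^\ast,x)$. Hence $\desc$ is a valid description of $x$ with respect to time bound $t$, so $K^t(x)\le |\desc|$.

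It then remains to bound $|\desc|$. For any reasonable encoding of (machine, input) pairs --- e.g.\ a self-delimiting (prefix-free) encoding of the machine followed by the raw input --- one has $|\langle M,w\rangle|\le |w|+O(|M|)$; since $M^\ast$ is fixed once and for all, $|M^\ast|$ is an absolute constant, so there is a constant $c$, depending only on $U$ and $M^\ast$ (not on $t$ or $x$), with $|\desc|\le |x|+c$. Combining the two inequalities gives $K^t(x)\le |x|+c$ for all $t(n)>0$ and all $x$.

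I do not expect a genuine obstacle here: the only delicate point is the regime of very small time budgets $t$, which is precisely why $M^\ast$ is chosen to halt instantly (no time is spent ``writing'' the output) and why the output is read directly off the tape --- with that convention the single chain of inequalities goes through uniformly in $t$. (If one instead worked in a model where merely producing the output costs $\Theta(|x|)$ steps, the fact is only of interest for $t(n)\ge n$, and the same argument applies verbatim.)
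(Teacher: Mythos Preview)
Your argument is correct and is exactly the approach the paper takes: it too uses the description $\Pi=(M,x)$ with $M$ a constant-length machine that halts immediately so that $M(x)=x$, giving $K^t(x)\le |x|+c$. Your proposal simply fleshes out the encoding and time-bound details more carefully than the paper's one-line justification.
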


\subsection{Average-case Hard Functions}
We turn to defining what it means for a function to be average-case hard
(for $\PPT$ algorithms).
\BD
We say that a function $f:\bitset^* \rightarrow \bitset^*$ is \emph{$\alpha(\cdot)$
hard-on-average ($\alpha$-HoA)} if for all $\PPT$ heuristic $\H$, for all
sufficiently large $n \in N$,
$$\Pr[x \leftarrow \{0,1\}^n : \H(x) = f(x)] < 1 - \alpha(|n|)$$
\ED
In other words, there does not exist a $\PPT$ ``heuristic'' $\H$
that computes $f$ with probability $1-\alpha(n)$ for infinitely many
$n \in N$.
We also consider what it means for a function to be average-case hard
to \emph{approximate}.
\BD
We say that a function $f:\bitset^* \rightarrow \bitset^*$ is \emph{$\alpha$
hard-on-average ($\alpha$-HoA) to $\beta(\cdot)$-approximate} if for all $\PPT$ heuristic $\H$, for all
sufficiently large $n \in N$,
$$\Pr[x \leftarrow \{0,1\}^n : |\H(x) - f(x)| \leq \beta(|x|) ] < 1 - \alpha(|n|)$$
\ED
In other words, there does not exists a $\PPT$ heuristic $\H$
that approximates $f$ within a $\beta(\cdot)$ additive term, with
probability $1-\alpha(n)$ for infinitely many $n \in N$.

Finally, we refer to a function $f$ as being \emph{mildly} HoA (resp
HoA to approximate) if there exists a polynomial
$p(\cdot)>0$ such that $f$ is $\frac{1}{p(\cdot)}$-HoA
  (resp. HoA to approximate).
\subsection{Computational Indistinguishability}
We recall the definition of (computational) indistinguishability \cite{GM84}.

\BD
\label{ind.def}
Two ensembles
$\{A_{n}\}_{n \in \N}$ and
$\{B_{n}\}_{n \in \N}$
are said to be {\em $\mu(\cdot)$-indistinguishable},
if for every probabilistic machine $D$ (the ``distinguisher'')
whose running time is polynomial
in the length of its first input,
there exists some $n_0 \in \N$
so that for every $n\geq n_0$:
\begin{eqnarray*}
\left| \Pr[D(1^n,A_{n})=1] - \Pr[D(1^n,B_{n})=1] \right| < \mu(n)
\end{eqnarray*}
We say that are $\{A_{n}\}_{n \in \N}$ and
$\{B_{n}\}_{n \in \N}$ simply \emph{indistinguishable}
if they are $\frac{1}{p(\cdot)}$-indistinguishable for every
polynomial $p(\cdot)$.
\ED

\subsection{Statistical Distance and Entropy}
For any two random variables $X$ and $Y$ defined over some set $\cV$, we let 
$\mathsf{SD}(X,Y) = \frac{1}{2} \sum_{v \in \cV} |\Pr[X = v] -\Pr[Y = v]|$ denote the
\emph{statistical distance} between $X$ and $Y$.
For a random variable $X$, let
$H(X) = \E [ \log \frac{1}{\Pr[X=x]} ]$ denote the (Shannon) entropy
of $X$, and let $H_{\infty}(X) = \min_{x \in Supp(X)}
\log \frac{1}{\Pr[X=x]}$ denote the \emph{min-entropy} of $X$.

We next demonstrate a simple lemma showing that any distribution that is
statistically close to random, has very high Shannon entropy.
\BL \label{lemma:SDtoH}
    For every $n\geq 4$, the following holds. Let
    $X$ be a random variable over $\{0, 1\}^n$ such that $\SD(X,\U_n) \leq \frac{1}{n^2}$. Then  $H(X_n) \ge n - 2.$
\EL
\begin{proof}
    Let $S = \{x \in \{0, 1\}^n : \Pr[X = x] \leq 2^{-(n - 1)}\}$.
    Note that for every $x \notin S$, $x$ will contribute at
    least
$$\frac{1}{2} \left( \Pr[X = x]  - \Pr[U_n = x] \right) \geq \frac{1}{2}
\left( \Pr[X =
x]  - \frac{\Pr[X =
  x]}{2} \right) = \frac{\Pr[X =x]}{4}$$
to $SD(X,\U_n)$. Thus,
    $$\Pr[X \notin S] \leq 4 \cdot \frac{1}{n^2}.$$
    Since for every $x \in S$, $\log \frac{1}{\Pr[X=x]} \geq n-1$ and the probability that $X \in S$ is at least $1 - 4/n^2$, it follows that
    $$H(X) \ge \Pr[X \in S](n-1) \ge (1 - \frac{4}{n^2}) (n-1) \ge n -
    \frac{4}{n} -1 \geq  n-2.$$
\end{proof}

\section{The Main Theorem}

\BT \label{main.thm} The following are equivalent:
\BI
\item [(a)] The existence of one-way functions.
\item [(b)] The existence of a polynomial $t(n)>0$ such that $K^t$ is
  mildly hard-on-average.
\item [(c)] For all constants $d>0, \varepsilon>0$, and every polynomial $t(n) \geq (1+\varepsilon) n$, $K^t$ is mildly hard-on-average to
  $(d \log n)$-approximate.
\EI
\ET

We prove Theorem \ref{main.thm} by showing that (b)
implies (a) (in Section \ref{OWF.sec}) and next that (a) implies
(c) (in Section \ref{minkt.sec}). Finally, (c) trivially implies (b).

Note that a consequence of \ref{main.thm} is that for \emph{every}
polynomial $t(n) \geq (1+\varepsilon) n$, where
      $\varepsilon>0$ is a constant
$t(n)$, mild average-case hardness of $K^{t}$ is
equivalent to the existence of one-way functions.

\section{OWFs from Mild Avg-case $K^t$-Hardness}
In this section, we state our main theorem.
\label{OWF.sec}
\BT \label{owf.thm} Assume there exist polynomials $t(n)>0, p(n)>0$ such that
$K^t$ is $\frac{1}{p(\cdot)}$-HoA. Then there
exists a weak OWF $f$ (and thus also a OWF).
\ET
\begin{proof}
Let $c$ be the constant from Fact \ref{KC.fact}.
Consider the function $f: \{0,1\}^{n+c + \lceil\log (n+c)\rceil} \rightarrow \{0,1\}^*$,
which given an input $\ell || \desc’$ where $|\ell| = \lceil \log(n+c) \rceil$ and $|\desc’|
= n+c$, outputs $\ell || U(\desc,1^{t(n)})$ where $\desc$ is the $\ell$-bit prefix of $\desc'$.
This function is only defined over some input lengths, but by an easy
padding trick, it can be transformed into a function $f’$ defined
over all input lengths, such that if $f$ is (weakly) one-way (over the
restricted input lengths), then $f’$ will be (weakly) one-way (over all
input lengths): $f’(x')$ simply truncates its input $x'$ (as little as
possible) so that the (truncated) input $x$ now becomes of length $m= n+c
+ \lceil \log(n+c) \rceil$ for some $n$ and outputs $f(x)$.

We now show if $K^t$ is $\frac{1}{p(\cdot)}$-HoA, then $f$ is a
$\frac{1}{q(\cdot)}$-weak OWF, where
$q(n) = 2^{2c+3}np(n)^2$, which concludes the proof of the theorem.
Assume for contradiction that $f$ is not a $\frac{1}{q(\cdot)}$-weak
  OWF. That is, there exists some $\PPT$ attacker $\A$ that inverts $f$ with probability
at least $1-\frac{1}{q(n)} \leq 1 - \frac{1}{q(m)}$
for
infinitely many $m = n+c + \lceil \log(n+c) \rceil$. Fix some such $m,n>2$.
By an averaging argument, except for a fraction $\frac{1}{2p(n)}$ of random tapes $r$
for $\A$, the \emph{deterministic} machine $\A_r$ (i.e., machine $\A$ with
randomness fixed to $r$) fails to invert $f$ with probability at most
$\frac{2p(n)}{q(n)}$. Fix some such ``good'' randomness $r$ for which
$\A_r$ succeeds to invert $f$ with probability $1-\frac{2p(n)}{q(n)}$.

We next show how to use $\A_r$ to compute $K^t$ with high probability over random inputs
$z\in \{0,1\}^n$. Our heuristic $\H_r(z)$ runs $\A_r(i||z)$ for all $i \in [n+c]$ where
$i$ is represented as a $\lceil \log (n+c) \rceil$ bit string, and outputs the length
of the smallest program $\desc$ output by $\A_r$ that produces the string $z$
within $t(n)$ steps.
Let $S$ be the set of strings $z\in \{0,1\}^n$ for which $\H_r(z)$ fails to
compute $K^t(z)$. Note that $\H_r$ thus fails with probability
$$fail_r = \frac{|S|}{2^{n}}.$$ Consider any string $z \in S$ and let
$w=K^t(z)$ be its $K^t$-complexity. By Fact \ref{KC.fact}, we have
that $w \leq n+c$. Since $\H_r(z)$ fails to compute $K^t(z)$, $\A_r$ must
fail to invert $(w||z)$. But, since $w \leq n+c$, the output $(w||z)$ is sampled with
probability
$$\frac{1}{n+c} \cdot \frac{1}{2^{w}}\geq \frac{1}{(n+c)}\frac{1}{2^{n+c}} \geq
  \frac{1}{n 2^{2c+1}} \cdot \frac{1}{2^{n}}$$
in the one-way
function experiment, so $\A_r$ must fail with probability at least
$$ |S| \cdot \frac{1}{n 2^{2c+1}} \cdot \frac{1}{2^{n}} = \frac{1}{n
  2^{2c+1}} \cdot \frac{|S|}{2^{n}} =  \frac{fail_r}{n 2^{2c+1}}$$
which by assumption (that $\A_r$ is a good inverter) is at most
that $\frac{2p(n)}{q(n)}$. We thus conclude that $$fail_r \leq \frac{2^{2c+2}np(n)}{q(n)}$$
Finally, by a union bound, we have that $\H$ (using a uniform random
 tape $r$) fails in computing $K^t$ with probability at most
 $$\frac{1}{2p(n)} + \frac{2^{2c+2}np(n)}{q(n)} =
 \frac{1}{2p(n)} + \frac{2^{2c+2}np(n)}{2^{c+3}np(n)^2} = \frac{1}{p(n)}.$$
 Thus, $\H$ computes $K^t$ with probability $1-\frac{1}{p(n)}$ for
 infinitely many $n \in \N$, which contradicts the
 assumption that $K^t$ is $\frac{1}{p(\cdot)}$-HoA.
\end{proof}

\section{Mild Avg-case $K^t$-Hardness from OWFs}
\label{minkt.sec}
We introduce the notion of a (conditionally-secure) \emph{entropy-preserving}
pseudo-random generator (EP-PRG) and next show (1) the existence of a condEP-PRG implies
that $K^t$ is hard-on-average (even to approximate), and (2) OWFs imply condEP-PRGs.

\subsection{Entropy-preserving PRGs}
We start by defining the notion of a \emph{conditionally-secure entropy-preserving PRG}.
\BD An efficiently computable function $G: \{0, 1\}^{n} \rightarrow \{0,
1\}^{n+\gamma\log n}$ is a \emph{$\mu(\cdot)$-conditionally secure entropy-preserving pseudorandom
  generator ($\mu$-condEP-PRG)}
if
there exist a sequence of events $=\{E_n\}_{n\in \N}$ and a
constant $\alpha$ (referred to as the \emph{entropy-loss constant}) such that
the following conditions hold:
\BI
\item {\bf (pseudorandomness):} $\{G(\U_n|E_n)\}_{n \in \N}$ and
  $\{\U_{n+\gamma\log n}\}_{n\in \N}$ are
  $\mu(n)$-indistinguishable;
  \item {\bf (entropy-preserving):} For all sufficiently large $n \in \N$,
    $H(G(\U_n|E_n)) \ge n - \alpha \log n$.
    \EI If for all $n$, $E_n = \{0,1\}^n$ (i.e., there is no
    conditioning), we say that $G$ is an
    \emph{$\mu$-secure entropy-preserving pseudorandom generator ($\mu$-EP-PRG)}.
    \ED
    We say that $G$ has {\em rate-1 efficiency} if its running time on
inputs of length $n$ is bounded by $n + O(n^{\varepsilon})$ for some constant $\varepsilon < 1$.
\subsection{Avg-case $K^t$-Hardness from Cond EP-PRGs}
\BT
Assume that for every $\gamma>1$, there exists a rate-1 efficient $\mu$-condEP-PRG $G: \{0, 1\}^{n} \rightarrow \{0,
1\}^{n+\gamma\log n}$ 
where
$\mu(n) = 1/n^2$. Then, for every constant $d > 0, \varepsilon > 0$,
for every polynomial $t(n) \geq (1 + \varepsilon)n$, $K^t$ is mildly hard-on-average to
  $(d \log n)$-approximate.
\ET

\begin{proof}
Let $\gamma \geq max(8,8d)$, and let $G': \{0, 1\}^{n} \rightarrow \{0,
1\}^{m'(n)}$ where $m'(n) = n+\gamma\log n$ be a rate-1 efficient
$\mu$-condEP-PRG, where $\mu = 1/n^2$.
For any constant $c$, let $G^c(x)$ be a function that computes $G'(x)$ and truncates the last
$c$ bits. It directly follows that $G^c$ is also a rate-1 efficient $\mu$-condEP-PRG (since $G'$ is
so).
Consider any $\varepsilon >0$ and any polynomial $t(n) \geq (1+\varepsilon)n$ and let
$p(n)=2n^{2(\alpha+\gamma+1)}$.

Assume for contradiction that there exists some $\PPT$ $\H$ that
  $\beta$-approximates $K^t$ with probability $1-\frac{1}{p(m)}$ for infinitely many
  $m \in \N$, where $\beta(n) = \gamma/8 \log n \geq d \log n$. Since $m'(n+1)-m'(n) \leq \gamma+1$, there must exist some
  constant $c\leq \gamma+1$ such that $\H$ succeeds (to
  $\beta$-approximate $K^t$) with probability
  $1-\frac{1}{p(m)}$ for infinitely many $m$ of the
  form $m= m(n) = n+ \gamma\log n - c$. Let $G(x) = G^c(x)$; recall
  that $G$ is a rate-1 efficient $\mu$-condEP-PRG (trivially, since $G^c$ is so), and let
  $\alpha,\{E_n\}$, respectively, be the entropy loss constant and sequence of
  events, associated with it.

We next show that $\H$ can be used to break the condEP-PRG $G$.
Towards this, recall that a random string has high $K^t$-complexity with
high probability: for $m = m(n)$, we have,
\begin{equation} \label{eq1}
   \Pr_{x \in \{0, 1\}^m}[K^t(x) \ge m - \frac{\gamma}{4}\log n] \ge
   \frac{2^m - 2^{m-\frac{\gamma}{4}\log n}}{2^m} = 1 -
   \frac{1}{n^{\gamma/4}},
   \end{equation}
since the total number of Turing machines with length smaller than
$m-\frac{\gamma}{4}\log n$ is only $2^{m-\frac{\gamma}{4}\log n}$. However, any string
output by the EP-PRG, must have ``low'' $K^t$ complexity: For every
sufficiently large $n,m=m(n)$, we have that,
\begin{equation} \label{eq2}
\Pr_{s\in \{0, 1\}^n}[K^t(G(s)) \ge m- \frac{\gamma}{2}\log n] = 0,
\end{equation}
since $G(s)$ can be represented by combining a seed $s$ of length
$n$ with the code of $G$ (of constant length), and the running time of
$G(s)$ is bounded by $t(|s|) = t(n) \leq t(m)$ for all sufficiently
large $n$, so $K^t(G(s)) =
n+O(1) = (m- \gamma \log n + c) + O(1) \leq m-\gamma/2\log n$ for sufficiently large $n$.

Based on these observations, we now construct a $\PPT$ distinguisher $\A$
breaking $G$. On input $1^n,x$, where $x \in \{0,
1\}^{m(n)}$, $\A(1^n, x)$ lets $w \leftarrow \H(x)$ and outputs 1 if $w \geq m(n) -
\frac{3}{8}\gamma \log n$ and 0 otherwise. Fix some $n$ and $m = m(n)$ for which $\H$ succeeds with probability $\frac{1}{p(m)}$.
The following two claims conclude that $\A$ distinguishes $\U_{m(n)}$ and $G(\U_n \mid E_n)$ with probability at least $\frac{1}{n^2}$.
\begin{clm}
    $\A(1^n,\U_m)$ outputs 1 with probability at least $1 - \frac{2}{n^{\gamma/4}}$.
\end{clm}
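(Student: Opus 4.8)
The plan is to combine the high‑probability lower bound on the $K^t$‑complexity of a uniformly random length‑$m$ string from~\eqref{eq1} with the approximation guarantee assumed of $\H$, through a single union bound. Since $\A(1^n,x)$ outputs $1$ precisely when $w = \H(x) \ge m - \frac{3}{8}\gamma\log n$, it suffices to show that this inequality holds for all but a $\frac{2}{n^{\gamma/4}}$ fraction of $x \in \{0,1\}^m$.

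First I would invoke~\eqref{eq1}: as fewer than $2^{m-(\gamma/4)\log n}$ strings of length $m$ admit a description shorter than $m - \frac{\gamma}{4}\log n$, for all but a $\frac{1}{n^{\gamma/4}}$ fraction of $x \leftarrow \U_m$ we have $K^t(x) \ge m - \frac{\gamma}{4}\log n$. Next, because $m = m(n)$ is one of the infinitely many input lengths on which $\H$ is assumed to succeed, for all but a $\frac{1}{p(m)}$ fraction of $x \leftarrow \U_m$ we have $|\H(x) - K^t(x)| \le \beta(m)$; here I use $\frac{1}{p(m)} \le \frac{1}{n^{\gamma/4}}$, which holds because $p(m) = 2m^{2(\alpha+\gamma+1)} \ge 2n^{2(\alpha+\gamma+1)} \ge n^{\gamma/4}$ (using $m \ge n$ and $\gamma \ge 8$). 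A union bound then shows that, outside a $\frac{2}{n^{\gamma/4}}$ fraction of $x$, both events hold, so that
$$ w \;=\; \H(x) \;\ge\; K^t(x) - \beta(m) \;\ge\; m - \tfrac{\gamma}{4}\log n - \beta(m) . $$
Finally, since $m = n + \gamma\log n - c = O(n)$, the approximation slack $\beta(m) = \frac{\gamma}{8}\log m$ equals $\frac{\gamma}{8}\log n$ up to lower‑order additive terms, and — by the choice $\gamma \ge \max(8,8d)$ governing the gaps in~\eqref{eq1} and in $\A$'s threshold — this leaves $w \ge m - \frac{3}{8}\gamma\log n$ for all sufficiently large $n$; hence $\A(1^n,x)$ outputs $1$, proving the claim.

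I do not expect any real obstacle here — the argument is essentially a one‑line union bound. The only point needing care is the accounting of the parameters: one must verify both that $\H$'s failure probability $\frac{1}{p(m)}$ is dominated by the $\frac{1}{n^{\gamma/4}}$ failure probability coming from~\eqref{eq1}, and that $\H$'s approximation error $\beta(m) = \Theta(\log n)$ sits within the $\frac{\gamma}{8}\log n$ margin that $\A$'s threshold $m - \frac{3}{8}\gamma\log n$ leaves relative to the typical $K^t$‑complexity $m - \frac{\gamma}{4}\log n$ of a random string. Both are guaranteed by the earlier choices $\gamma \ge \max(8,8d)$, $\beta(n) = \frac{\gamma}{8}\log n$, and $p(n) = 2n^{2(\alpha+\gamma+1)}$, so the proof of this claim needs nothing beyond this bookkeeping.
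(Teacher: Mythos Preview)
Your proposal is correct and follows essentially the same approach as the paper: a union bound over the event that $K^t(x) < m - \tfrac{\gamma}{4}\log n$ (probability at most $1/n^{\gamma/4}$ by \eqref{eq1}) and the event that $\H$ fails to $\beta$-approximate $K^t(x)$ (probability at most $1/p(m) \le 1/n^{\gamma/4}$), after which the threshold arithmetic forces $\A$ to output $1$. You are in fact slightly more careful than the paper in distinguishing $\beta(m)=\tfrac{\gamma}{8}\log m$ from $\tfrac{\gamma}{8}\log n$; the paper silently treats these as equal.
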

\begin{proof}
Note that $\A(1^n,x)$ will output 1 if $x$ is a string with
$K^t$-complexity larger than $m - \gamma/4 \log n$ and $\H$ outputs a
$\gamma/8 \log n$-approximation to $K^t(x)$. Thus,
\begin{align*}
    &\Pr[\A(1^n,x) = 1] \\
    &\ge \Pr[K^t(x) \ge m - \gamma /4 \log n \wedge \H \textrm{ succeeds on }x] \\
    &\ge 1 - \Pr[K^t(x) < m - \gamma /4 \log n] - \Pr[\H \textrm{ fails on }x] \\
    & \ge 1 - \frac{1}{n^{\gamma/4}} - \frac{1}{p(n)}\\
          & \ge 1 - \frac{2}{n^{\gamma/4}}.
\end{align*}
where the probability is over a random $x \leftarrow \U_m$ and the
randomness of $\A$ and $\H$.
\end{proof}

\begin{clm}
    $\A(1^n,G(\U_n \mid E_n))$ outputs 1 with probability at most
    $1-\frac{1}{n}+\frac{2}{n^{\alpha + \gamma}}$
\end{clm}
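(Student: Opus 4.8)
The plan is to bound, for the fixed $n$ (and $m=m(n)$) on which $\H$ is a good approximator, the probability that $\H$ \emph{fails} to $\beta$-approximate $K^t$ on $G(\U_n\mid E_n)$, and then transfer this to the distinguishing advantage of $\A$ via \eqref{eq2}. Concretely, by \eqref{eq2} every string $x$ in the range of $G$ on $n$-bit seeds satisfies $K^t(x)<m-\frac{\gamma}{2}\log n$; hence whenever $\H$ succeeds on such an $x$ (i.e.\ $|\H(x)-K^t(x)|\le \beta(n)=\frac{\gamma}{8}\log n$) we get $\H(x)<m-\frac{3\gamma}{8}\log n$, so $\A(1^n,x)=0$. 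Thus $\A(1^n,\cdot)$ can output $1$ on an output of $G$ only when $\H$ fails there, and it suffices to prove $\Pr[\H\text{ fails on }G(\U_n\mid E_n)]\le 1-\frac1n+\frac2{n^{\alpha+\gamma}}$.

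The heart of the argument is a Shannon-to-min-entropy averaging step that extracts a ``good'' event. Let $D=G(\U_n\mid E_n)$, a distribution on $\{0,1\}^m$. Since the conditioning is over at most $2^n$ seeds, every atom of $D$ has probability $\ge 2^{-n}$, so $\log\frac1{\Pr[D=x]}\in[0,n]$; I would bucket atoms by level $j=\lfloor\log\frac1{\Pr[D=x]}\rfloor\in\{0,\dots,n\}$, writing $q_j$ for the mass of bucket $B_j$. From the entropy-preserving property $H(D)\ge n-\alpha\log n$ together with $H(D)\le 1+\sum_j j\,q_j$, a counting argument shows that some bucket $B_{j^*}$ with $j^*\ge n-(\alpha+1)\log n$ has mass $q_{j^*}\ge\frac1n$: otherwise all but an $O(\log n)/n$ fraction of the mass would sit at levels below $n-(\alpha+1)\log n$ (and it cannot ``leak'' further down, since levels are capped at $n$), forcing $H(D)<n-\alpha\log n$ for large $n$. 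Taking $\good=\{D\in B_{j^*}\}$ gives $\Pr[\good]\ge\frac1n$, and conditioning on $B_{j^*}$ inflates each atom by a factor $\le n$, so $D\mid\good$ has min-entropy $\ge n-(\alpha+2)\log n = m-(\alpha+\gamma+2)\log n+c$.

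To finish, the high conditional min-entropy lets me compare $\H$'s conditional failure with its failure on uniform $\U_m$: since $\H$ fails on $\U_m$ with probability $\le\frac1{p(m)}$ and $D\mid\good$ puts mass $\le 2^{-(m-(\alpha+\gamma+2)\log n)}$ on each point, $\Pr[\H\text{ fails on }D\mid\good]\le n^{\alpha+\gamma+2}/p(m)\le\frac1{2n^{\alpha+\gamma}}$ by the choice $p(n)=2n^{2(\alpha+\gamma+1)}$ (using $p(m)\ge p(n)$). Hence
$$\Pr[\H\text{ fails on }D]\le (1-\Pr[\good]) + \Pr[\good]\cdot\Pr[\H\text{ fails on }D\mid\good] = 1-\Pr[\good]\bigl(1-\Pr[\H\text{ fails on }D\mid\good]\bigr)\le 1-\frac1n\Bigl(1-\frac1{2n^{\alpha+\gamma}}\Bigr),$$
which is at most $1-\frac1n+\frac2{n^{\alpha+\gamma}}$; combined with the first step this gives the claim.

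The main obstacle is the averaging step of the second paragraph: one must check that the entropy deficit $\alpha\log n$ plus the $\frac1n$ probability we are willing to sacrifice cost only an $O(\log n)$-bit min-entropy deficit, and that this deficit — further inflated by the $\gamma\log n$ gap between output length $m$ and seed length $n$ — produces only a $\poly(n)$ blow-up of $\H$'s uniform failure probability, which $p(n)=2n^{2(\alpha+\gamma+1)}$ comfortably dominates. The reduction via \eqref{eq2} and the final conditioning inequality are routine bookkeeping.
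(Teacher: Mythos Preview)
Your argument is correct and follows essentially the same route as the paper: use entropy preservation to isolate a ``good'' sub-event of probability at least $1/n$ on which the PRG output has min-entropy $n-O(\log n)$, so that $\H$'s failure probability there is at most a $\poly(n)$ blow-up of its failure on $\U_m$, and combine via \eqref{eq2}. The only noteworthy differences are presentational: the paper obtains the good set by a single Markov step (all $y$ with $-\log\Pr[G(\U_n\mid E_n)=y]\ge n-\alpha\log n-1$) rather than your bucketing-plus-pigeonhole, and it first fixes the random coins of $\H$ and argues by contradiction (counting the set $S$ of good outputs on which $\A_r$ outputs $1$), whereas you work directly with randomized $\H$ via a density-ratio bound. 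Your direct bound is arguably cleaner; the paper's Markov step is slightly more economical than bucketing. Either way the numerics match.
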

\begin{proof}
Recall that by assumption, $\H$ fails to $(\gamma/8 \log n)$-approximate $K^t(x)$ for
a random $x \in \{0,1\}^m$ with probability at most $\frac{1}{p(m)}$. By
an averaging argument, for at least a $1 - \frac{1}{n^2}$
fraction of random tapes $r$ for $\H$, the deterministic machine
$\H_r$ fails to approximate $K^t$ with probability at most
$\frac{n^2}{p(m)}$. Fix some ``good'' randomness $r$ such that $\H_r$
approximates
$K^t$ with probability at least $1 - \frac{n^2}{p(m)}$.
We next analyze the success probability of $\A_r$. Assume for
contradiction that $A_r$ outputs 1 with probability at least
$1-\frac{1}{n}+\frac{1}{n^{\alpha+\gamma}}$ on input $G(\U_n \mid E_n)$.
Recall that (1) the entropy of $G(\U_n \mid E_n)$ is at least $n - \alpha\log
n$ and (2) the quantity $-\log \Pr[G(\U_n \mid E_n)=y]$ is upper bounded by
$n$ for all $y \in G(\U_n\mid E_n)$ since $H_{\infty}(G(\U_n\mid E_n)) \leq H_{\infty}(\U_n \mid E_n) \leq H_{\infty}(\U_n) = n$.
By an averaging argument, with probability at least $\frac{1}{n}$, a
random $y \in G(\U_n \mid E_n)$ will satisfy
$$-\log \Pr[G(\U_n \mid E_n) = y] \ge (n - \alpha\log n)  - 1. $$
We refer to an output $y$ satisfying the above condition as being
``good'' and other $y$'s as being ``bad''. Let $S = \{y \in G(\U_n \mid E_n): \A_r(1^n,y) = 1 \wedge y \textrm{ is
  good} \}$, and let $S' = \{y \in G(\U_n \mid E_n): \A_r(1^n,y) = 1 \wedge y \textrm{ is
  bad} \}$. Since
$$\Pr[\A_r(1^n,G(\U_n \mid E_n)) = 1] = \Pr[G(\U_n\mid E_n)\in S
] + \Pr[G(\U_n \mid E_n)\in S'],$$ and $\Pr[G(\U_n \mid E_n)\in S']$ is
at most the probability that $G(\U_n)$ is ``bad'' (which as argued above
is at most $1- \frac{1}{n}$), we have that
$$\Pr[G(\U_n\mid E_n)\in S ]  \ge \left(1 - \frac{1}{n} + \frac{1}{n^{\alpha+\gamma}}\right) - \left(1-\frac{1}{n}\right) = \frac{1}{n^{\alpha+\gamma}}.$$
Furthermore, since for every $y \in S$, $\Pr[G(\U_n\mid E_n)=y] \leq 2^{-n +
  \alpha\log n + 1}$, we also have,
$$\Pr[G(\U_n\mid E_n)\in S ] \leq |S| 2^{-n + \alpha\log n + 1}$$
So,
$$|S| \ge \frac{2^{n -
  \alpha\log n - 1}}{n^{\alpha+\gamma}} = 2^{n-(2\alpha +
\gamma)\log n - 1}$$
However, for any $y \in G(\U_n\mid E_n)$, if $\A_r(1^n,y)$ outputs 1,
then by Equation \ref{eq2}, $\H_r(y) > K^t(y) + \gamma/8$, so $\H$
fails to output a good approximation. (This follows, since by Equation
\ref{eq2}, $K^t(y) < n - \gamma/2 \log n$ and $\A_r(1^n,y)$ outputs 1
only if $\H_r(y) \geq n - \frac{3}{8}\gamma \log n$.)

Thus, the probability that $\H_r$ fails (to output a good
approximation) on a random $y \in \{0, 1\}^m$ is at least
$$|S|/2^m = \frac{2^{n-(2\alpha +
    \gamma)\log n - 1}}{2^{n + \gamma \log n - c}}
\ge 2^{-2 (\alpha+\gamma)\log n - 1} = \frac{1}{2n^{2(\alpha+\gamma)}}$$
which contradicts the fact that $\H_r$ fails with approximate $K^t$ probability at most
$\frac{n^2}{p(m)} < \frac{1}{2n^{2(\alpha+\gamma)}}$ (since $n < m$).

We conclude that for every good randomness $r$, $\A_r$ outputs 1 with
probability at most
$1-\frac{1}{n}+\frac{1}{n^{\alpha+\gamma}}$. Finally, by union bound
(and since a random tape is bad with probability $\leq \frac{1}{n^2}$),
we have that the probability that $\A(G(\U_n \mid E_n))$ outputs 1 is at most
$$\frac{1}{n^2} + \left(1-\frac{1}{n}+ \frac{1}{n^{\alpha+\gamma}}\right) \leq 1-\frac{1}{n} + \frac{2}{n^2},$$
since $\gamma \geq 2$.
\end{proof}
\noindent We conclude, recalling that $\gamma \geq 8$, that $\A$
distinguishes $\U_m$ and $G(\U_n \mid E_n)$ with probability of at
least $$\left(1 - \frac{2}{n^{\gamma/4}}\right) - \left(1-\frac{1}{n}
  + \frac{2}{n^2}\right) \ge \left(1 -\frac{2}{n^2}\right) - \left(1-
\frac{1}{n} + \frac{2}{n^2}\right)  = \frac{1}{n} - \frac{4}{n^2} \geq
\frac{1}{n^2}$$
for infinitely many $n \in \N$.
\end{proof}

\subsection{Cond EP-PRGs from OWFs}
In this section, we show how to construct a condEP-PRG from any OWF.
Towards this, we first recall the construction of \cite{HILL99,Gol01,Yu} of a
PRG from a \emph{regular} one-way function \cite{GKL93}.
\BD \label{regular.def}
    A function $f:\{0, 1\}^* \rightarrow \{0, 1\}^*$ is called
    \emph{regular} if there exists a function $r:\N \rightarrow \N$
    such that for all sufficiently long $x \in \{0, 1\}^*$,
    $$2^{r(|x|)-1} \leq |f^{-1}(f(x))| \leq 2^{r(|x|)}.$$
    We refer to $r$ as the regularity of $f$.
\ED
\noindent As mentioned in the introduction, the construction proceeds in the following two steps given a OWF $f$
with regularity $r$.
\BI
\item  We ``massage'' $f$ into a different OWF $\hat{f}$ having the
  property that there exists some $\ell(n) = n- O(\log n)$ such that $\hat{f}(\U_n)$ is statistically close to $\U_{\ell(n)}$---we will
  refer to such a OWF as being \emph{dense}.
  This is
  done by applying pairwise-independent hash functions (acting as
  strong extractors) to both the input and the output of
the OWF (parametrized to match the regularity $r$) to
``squeeze'' out randomness from both the input and the output.
$$\hat{f}(s||\sigma_1||\sigma_1)= \sigma_1||\sigma_2|| [h_{\sigma_1}(s)]_{r-O(\log n)} || [h_{\sigma_2} (f(s))]_{n-r-O(\log n)} $$
where $[a]_j$ means $a$ truncated to $j$ bits.
\item We next modify $\hat{f}$ to include additional randomness in
  the input (which is also revealed in the output) to make sure
the function has a hardcore function:
$$f'(s||\sigma_1||\sigma_2||\sigma_{GL}) = \sigma_{GL} || \hat{f}(s||\sigma_1||\sigma_1)$$
  \item We finally use $f'$ to construct a PRG $G^r$ by simply adding the
  the Goldreich-Levin hardcore bits~\cite{GL89}, $GL$, to the output of the function $f'$:
  $$G^r(s||\sigma_1||\sigma_2||\sigma_{GL})=
  f'(s||\sigma_1||\sigma_2|| \sigma_{GL}) || GL(s || \sigma_1 || \sigma_2,
  \sigma_{GL}))$$
\EI
    We note that the above steps do not actually produce a
    ``fully secure'' PRG as the statistical distance between the output
    of $\hat{f}(\U_n)$ and uniform is only $\frac{1}{\poly(n)}$ as
    opposed to being negligible. \cite{Gol01} thus presents a final
    amplification step to deal with this issue---for our purposes
    it will suffice to get a $\frac{1}{\poly(n)}$ indistinguishability
    gap so we will not be concerned about the amplification step.

    We remark that nothing in the above steps requires $f$ to be a
    one-way function defined on the domain $\{0,1\}^n$---
    all three steps
    still work even for one-way functions defined over domains
    $S$ that are different than $\{0,1\}^n$, as long as a lower bound on the size of the domain
    is efficiently computable (by a minor modification of the
    construction in Step 1 to account for the size of $S$). Let us
    start by formalizing this fact.

\BD Let $\S = \{S_n\}$ be a sequence of sets such that $S_n \subseteq
\{0,1\}^n$ and let $f: S_n \rightarrow \bitset^*$ be a polynomial-time
computable function. $f$ is said to be a \emph{one-way function
  over $\S$ ($\S$-OWF)} if for every $\PPT$
algorithm $\A$, there exists a negligible function $\mu$ such that for
all $n \in \N$,
	$$ \Pr[x \leftarrow S_n; y = f(x) : A(1^n,y) \in f^{-1}(f(x))
        ] \leq \mu(n) $$
        We refer to $f$ as being regular if it satisfies Definition
        \ref{regular.def} with the exception that we only quantify
        over all $n \in N$ and all $x \in S_n$ (as opposed to all $x
        \in \{0,1\}^n)$.
        \ED

\noindent We say that a \emph{family of functions $\{f_i\}_{i \in I}$ is efficiently
computable} if there exists a polynomial-time algorithm $M$ such that
$M(i,x) = f_i(x)$.

\BL [implicit in \cite{Gol01,Yu}]\label{lemma1}
Let $\S = \{S_n\}$ be a sequence of sets such that $S_n \subseteq
\{0,1\}^n$, let $s$ be an efficiently computable function such that
$s(n) \leq \log |S_n|$, and let $f$ be an $\S$-OWF with
regularity $r(\cdot)$.
Then, there exists a constant $c\geq 1$ such that for every $\alpha', \gamma' \geq
0$, there exists an efficiently computable family of functions
$\{f'_i\}_{i \in \N}$, and an efficiently computable function $GL$,
such that the following holds for $\ell(n) = s(n) + 3n^c -2\alpha'
\log n$, $\ell'(n) = \ell(n) + \gamma' \log n$:
\BI
\item {\bf density:} For all sufficiently large $n$, the
  distributions
  \BI
  \item $\left \{x
\leftarrow S_{n}, \sigma_1, \sigma_2, \sigma_{GL}\leftarrow \{0, 1\}^{n^c}:
f'_{r(n)}(x, \sigma_1, \sigma_2, \sigma_{GL}) \right\}$, and
\item $\U_{\ell(n)}$
  \EI
  are $\frac{3}{n^{\alpha'/2}}$-close
in statistical distance.

\item {\bf pseudorandomness:} The ensembles of distributions,
  \BI
  \item $\left \{x
      \leftarrow S_{n}, \sigma_1, \sigma_2, \sigma_{GL}\leftarrow \{0, 1\}^{n^c}:
f'_{r(n)}(x, \sigma_1, \sigma_2, \sigma_{GL}) || GL(x, \sigma_1, \sigma_2, \sigma_{GL})
\right\}_{n \in \N}$, and
  \item
    $\left\{\U_{\ell'(n)}\right\}_{n \in \N}$
    \EI
    are $\frac{4}{n^{\alpha'/2}}$-indistinguishable.
\EI
  \EL
\begin{proof}
  Given a $r(\cdot)$-regular $\S$-OWF $f$, the
  construction of $f'$ has the form

$$f'(s||\sigma_1||\sigma_1 ||\sigma_{GL})=\sigma_{GL}||\sigma_1||\sigma_2||
 [h_{\sigma_1}(s)]_{r-\alpha' \log n} || [h_{\sigma_2} (f(s))]_{s(n)-r-\alpha' \log n} $$
where $|x| =
  n, |\sigma_1|=|\sigma_2| =|\sigma_c| = n^c$, and $GL(x,\sigma_1,\sigma_2,\sigma_{GL})$ is
  simply the Goldreich-Levin hardcore predicate \cite{GL89} outputting $\gamma'
  \log n$ inner products between $x$ and vectors in $\sigma_{GL}$.
The function $f'_r$ thus maps $n' = n + 3n^c$ bits to
  $3n^c + s(n) - 2\alpha' \log n$ bits, and once we add the output of
  $GL$, the total output length becomes $3n^c + s(n) - 2\alpha' \log
  n + \gamma' \log n$ as required.
  The proof in \cite{Gol01,Yu} directly works to show that $\{f_i\},
  GL$ satisfy the requirements stated in the theorem. (For the reader's convenience, we
  present a simple self-contained proof of this in Appendix \ref{regular.sec}.\footnote{This proof may be of independent didactic interest as an
    elementary proof of the existence of PRGs from regular OWFs.})
\end{proof}

We additionally observe that every OWF actually is a regular $\S$-OWFs for
a sufficiently large $\S$.
\BL \label{lemma2}
Let $f$ be an one way function. There exists an integer function $r(\cdot)$ and a sequence of sets $\S = \{S_n\}$ such that $S_n \subseteq \{0, 1\}^n$, $|S_n| \ge \frac{2^n}{n}$, and $f$ is a $\S$-OWF with regularity $r$.
\EL
\begin{proof}
The following simple claim is the crux of the proof:
  \begin{clm}
    For every $n \in \N$, there exists an integer $r_n \in [n]$ such that
    $$\Pr[x \leftarrow \{0, 1\}^n:  2^{r_n-1} \leq |f^{-1}( f(x)|) \leq 2^{r_n}] \ge \frac{1}{n}.$$
\end{clm}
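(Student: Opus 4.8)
The claim is a pure counting statement: it uses nothing about $f$ beyond the fact that it is a total function on $\{0,1\}^n$, and it follows from a one-line averaging (pigeonhole) argument over the possible sizes of the preimage sets. So I expect no real obstacle; the only point requiring a moment's care is a boundary case, discussed at the end.

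First I would fix $n$ and observe that for every $x\in\{0,1\}^n$ the set $f^{-1}(f(x))$ is a nonempty subset of $\{0,1\}^n$, so $1\le |f^{-1}(f(x))|\le 2^n$ and hence $0\le \log_2|f^{-1}(f(x))|\le n$. For each integer $r\in[n]$ I would define the bucket
\[
  S_r \;=\; \bigl\{\, x\in\{0,1\}^n \;:\; 2^{r-1}\le |f^{-1}(f(x))|\le 2^{r}\,\bigr\}.
\]
The next step is to check that these $n$ buckets cover all of $\{0,1\}^n$: given $x$, write $k=|f^{-1}(f(x))|$; if $k=1$ then $x\in S_1$ (as $2^0=1\le 1\le 2^1$), and if $k\ge 2$ then $x\in S_r$ for $r=\lceil\log_2 k\rceil\in\{1,\dots,n\}$ (and if $k=2^n$, i.e.\ $f$ is constant, then $x\in S_n$). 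Hence $\bigcup_{r=1}^n S_r=\{0,1\}^n$; the buckets may overlap at exact powers of two, but that only increases $\sum_r |S_r|$, so in any case $\sum_{r=1}^{n}|S_r|\ge 2^n$.

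By averaging there is some $r_n\in[n]$ with $|S_{r_n}|\ge 2^n/n$, which is exactly $\Pr[x\leftarrow\{0,1\}^n : 2^{r_n-1}\le |f^{-1}(f(x))|\le 2^{r_n}]\ge \tfrac1n$, proving the claim. In the remainder of the proof of Lemma~\ref{lemma2} one then sets $S_n=S_{r_n}$ and $r(n)=r_n$: the regularity bound $2^{r(n)-1}\le |f^{-1}(f(x))|\le 2^{r(n)}$ holds for all $n$ and all $x\in S_n$ by construction, $|S_n|\ge 2^n/n$, and $f$ restricted to $S_n$ is an $\S$-OWF since an inverter succeeding with noticeable probability on the uniform distribution over $S_n$ would, since $S_n$ has density $\ge 1/n$ in $\{0,1\}^n$, succeed with noticeable probability on $\U_n$, contradicting one-wayness of $f$. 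The only subtlety in the whole argument is making sure the buckets (defined with non-strict inequalities at both ends) genuinely cover every string, including the cases $k=1$ and $k=2^n$, which is handled above.
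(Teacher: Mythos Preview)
Your proof is correct and matches the paper's argument almost verbatim: define the $n$ buckets, note they cover $\{0,1\}^n$, and apply averaging (pigeonhole). You are in fact slightly more careful than the paper, which asserts the bucket probabilities sum to exactly $1$ rather than $\ge 1$; you correctly observe that the overlap at exact powers of two only helps.
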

\begin{proof}
    For all $i \in [n]$, let
    $$w(i) = \Pr[x \leftarrow \{0, 1\}^n: 2^{i-1} \leq |f^{-1}( f(x))| \leq 2^i].$$
    Since for all $x$, the number of pre-images that map to $f(x)$ must be in the range of $[1, 2^n]$, we know that $\sum_{i=1}^n w(i)=1$. By an averaging argument, there must exists such $r_n$ that $w(r_n) \ge \frac{1}{n}$.
\end{proof}

    Let $r(n) = r_n$ for every $n \in N$, $S_n = \{x \in \{0,
    1\}^n:2^{r(n)-1} \leq |f^{-1}( f(x))| \leq 2^{r(n)}]\}$; regularity
    of $f$ when the input domain is restricted to $\S$ follows
    directly. It only remains to show that $f$ is a $\S$-OWF; this
    follows directly from the fact that the set $S_n$ are dense in
    $\{0,1\}$. More formally, assume for contradiction that there
    exists a $\PPT$ algorithm $\A$ that inverts $f$ with probability
    $\varepsilon(n)$ when the input is sampled in $S_n$. Since $|S_n|
    \ge \frac{2^n}{n}$, it follows that $\A$ can invert $f$ with
    probability at least $\varepsilon(n)/n$ over uniform distribution,
    which is a contradiction (as $f$ is a OWF).
\end{proof}

By combining Lemma \ref{lemma1} and Lemma \ref{lemma2}, we can
directly get an EP-PRG defined over a subset $\S$. We next turn to
showing how to instead get a \emph{$\mu$-conditionally secure} EP-PRG that is defined over $\{0,1\}^n$.
\BT \label{thm:allmu}
Assume that one way functions exist. Then, there exists a polynomial $t_0(\cdot)$ such that for every $\gamma > 1, \delta > 1$,
there exists a $\left(\frac{1}{{n}^\delta}\right)$-condEP-PRG $G'_{\delta, \gamma}:\{0, 1\}^{n} \rightarrow \{0, 1\}^{n + \gamma \log {n}}$ with running time bounded by $(\gamma + \delta) t_0(n)$.
\ET

\begin{proof}
    By Lemma~\ref{lemma2}, there exists a
    sequence of sets $\S= \{S_n\}$ such that $S_n \subseteq \{0,1\}^n,
    |S_n| \ge \frac{2^n}{n}$, a function $r(\cdot)$, and an $\S$-OWF $f$ with regularity
    $r(\cdot)$.
    Let $s(n) = n - \log
    n$ (to ensure that $s(n) \leq \log |S_n|$). Let $c$ be the
    constant guaranteed to exist by Lemma~\ref{lemma1} w.r.t. $\S$ and $f$.
    Consider any $\delta,\gamma>1$ and define $\alpha' = 8c\delta$ and $\gamma' = (c+1)\gamma +
    2\alpha' + 3$, and define $\ell(n),\ell'(n)$ just as in the
    stament of Lemma~\ref{lemma1}, namely, $\ell(n) = s(n) +
    3n^c -2 \alpha' \log n$ and $\ell'(n) = \ell(n) + \gamma' \log n$.
    Let $\{f'_i\}_{i \in \N}$ and $GL$ be the functions guaranteed to
    exists by Lemma~\ref{lemma1} w.r.t. $\alpha',\gamma'$,
%
and consider the function $G_{\delta, \gamma}:\{0, 1\}^{\log n + n + 3n^c}
    \rightarrow \{0, 1\}^{\ell'(n)}$ defined as follows:
$$G_{\delta, \gamma}(i, x, \sigma_1,\sigma_2, \sigma_{GL}) = f'_{i}(x,
    \sigma_1,\sigma_2,\sigma_{GL}) || GL(x, \sigma_1,\sigma_2, \sigma_{GL})$$
    where $|i| = \log n, i\in [n], |x| = n,
    |\sigma_1|=|\sigma_2|=|\sigma_{GL}| = n^c$.
    Let $n' = n'(n) =\log
    n + n + 3n^c$ denote the input length of $G_{\delta, \gamma}$. Let $\{E_{n'(n)}\}$ be
    a sequence of events where
    $$E_{n'(n)} = \{i, x,
    \sigma_1,\sigma_2,\sigma_{GL} : i = r(n), x \in S_{n},
    \sigma_1,\sigma_2,\sigma_{GL} \in \{0,1\}^{n^c}\} $$

\noindent    Note that the two distributions,
    \BI
    \item $\{x
    \leftarrow S_{n}, \sigma_1,\sigma_2,\sigma_{GL} \leftarrow \{0, 1\}^{n^c}: f'_{r(n)}(x, \sigma_1,\sigma_2,\sigma_{GL}) ||
    GL(x,\sigma_1,\sigma_2,\sigma_{GL})\}_{n \in \N}$, and

  \item $G_{\delta, \gamma}(\U_{n'} \mid E_{n'})$
    \EI
     are identically distributed. It follows
    from Lemma~\ref{lemma1} that $\{G_{\delta, \gamma}(\U_{n'} \mid E_{n'})\}_{n \in \N}$
    and $\{\U_{\ell'(n)}\}_{n\in \N}$ are
    $\frac{4}{n^{\alpha'/2}}$-indistinguishable. Note that for
    $\alpha' = 8c\delta$,
    we have that $\frac{4}{n^{\alpha'/2}} =  \frac{4}{n^{4c\delta}}
    \leq \frac{1}{n'(n)^\delta}$ for sufficiently large $n$. Thus, $G_{\delta, \gamma}$ satisfies the
          pseudorandomness property of a $\left(\frac{1}{n'^\delta}\right)$-cond EP-PRG.

    We further show that the output of $G_{\delta, \gamma}$ preserves entropy. Let
    $X_n$ be a random variable uniformly distributed over $S_n$. By
    Lemma~\ref{lemma1}, $f'_{r(n)}(X_n, \U_{3n^c})$ is
    $\frac{4}{n^{\alpha'/2}} \leq  \frac{4}{n^{4c\delta}}
          \leq \frac{1}{\ell(n)^2}$
      close to $\U_{\ell(n)}$ in statistical distance for sufficiently
      large $n$. By Lemma~\ref{lemma:SDtoH} it thus holds that $$H(f'_{r(n)}(X_n, \U_{3n^c})) \ge \ell(n) - 2.$$
    It follows that
    $$H(f'_{r(n)}(X_n, \U_{3n^c}), GL(X_n, \U_{3n^c})) \ge H(f'_{r(n)}(X_n, \U_{3n^c})) \ge \ell(n) - 2.$$
    Notice that $G_{\delta, \gamma}(\U_{n'} \mid E_{n'})$ and $(f'_{r(n)}(X_n,
    \U_{3n^c}), GL(X_n, \U_{3n^c}))$ are identically distributed, so on inputs of length $n'=n'(n)$, the entropy loss of $G_{\delta, \gamma}$ is $n' - (\ell(n) - 2) \leq (2\alpha'+2)\log n + 2 \leq (2\alpha'+4)\log n'$, thus $G_{\delta, \gamma}$ satisfies the entropy-preserving property (by setting the entropy loss $\alpha$ in cond EP-PRG to be $(2\alpha'+4)$).

    The function $G$ maps $n' = \log n + n + 3n^c$ bits to $\ell'(n)$
    bits, and it is thus at least
    $\ell'(n) - n' \ge (\gamma' - 2\alpha' - 2) \log n$
    -bit expanding. Since $n' \leq n^{c+1}$ for
    sufficiently large $n$ and recall that $\gamma' = (c+1)\gamma +
    2\alpha' + 2$, $G_{\delta, \gamma}$ will expand its input by at least
    $(\gamma' - 2\alpha' - 2) \log n \ge (c+1)\gamma \log n \ge \gamma \log n'$
    bits.

    Notice that although $G_{\delta, \gamma}$ is only defined over some input
    lengths $n = n'(n)$, by taking ``extra'' bits in the input and appending them
    to the output, $G_{\delta, \gamma}$ can be transformed to a cond EP-PRG $G'_{\delta, \gamma}$
    defined over all input lengths: $G'_{\delta, \gamma}(x')$ finds a prefix $x$ of
    $x'$ as long as possible such that $|x|$ is of the form $n' = \log n + n
    + 3n^c$ for some $n$, rewrites $x'=x||y$, and outputs
    $G_{\delta, \gamma}(x)||y$. The entropy preserving and the pseudorandomness property of
    $G'_{\delta, \gamma}$ follows directly; finally, note that if $|x'|$ is
    sufficiently large, it holds that $n^{c+1} \ge |x'|$, and thus by
    the same argument as above, $G'_{\delta, \gamma}$ will also expand its input by at least $\gamma \log |x'|$ bits.

    We finally show that there exists some polynomial $t_0(n')$ such
    that for every $\delta, \gamma > 1$,
    $(\gamma + \delta) t_0(n')$ bounds the running time of
    $G'_{\delta,\gamma}$ on
    inputs of length $n'$. To see this, note that the OWF used in this construction can be
    assumed to have some fixed polynomial running time.
    The hash function and the GL hardcore function take (no more than) $O(n^c)$ time to output one bit, and in total the hash function outputs at most $O(n)$ bits,
    so the running time of the hash function is $O(n^{c+1})$.
    (If $\delta$ increases, then $\alpha'$ increases---recall that $\alpha' \ge 8c\delta$---and the hash function outputs fewer bits and runs faster.)
    On the other hand, for all $\gamma, \delta$, $G$ outputs $\gamma'\log(n) = ((c+1)\gamma + 2\alpha' + 2)\log n = (\gamma + \delta) O(\log n)$ GL hardcore bits.
    Thus, for any $\gamma, \delta$, $G'$ runs in $\poly(n) + O(n^{c+1}) + (\gamma + \delta) O(n^c\log n) \leq (\gamma + \delta)t_0(n')$ time for some polynomial $t_0(n')$ over input of length $n'$.
  \end{proof}

We now use a standard padding trick to obtain a rate-1 efficient
$\mu$-cond EP-PRG: we simply output the first $n-\ell$ bits unchanged,
and next apply a cond EP-PRG on the last $\ell$ bits. Since we only
have a cond EP-PRG that satisfies inverse polynomial (as opposed to
negligible) indistinguishability, we need to be a bit careful with the
choice of the parameters.

\BT
Assume that one way functions exist. Then, for every $\gamma>1$, there exists a rate-1 efficient $\mu$-cond EP-PRG $G_{\gamma}: \{0, 1\}^{n} \rightarrow \{0,
1\}^{n+\gamma\log n}$, where $\mu(n) = 1/n^2$.
\ET

\begin{proof}
    Let $t_0(\cdot)$ be the polynomial guaranteed to exist due to
    Theorem~\ref{thm:allmu}. Let $c_0$ be a constant such that
    $O(n^{c_0}) \ge t_0(n)$. Consider any $\gamma>1$, and let $\gamma'
    = 2c_0\gamma$ and $\delta'
    = 4c_0$ and $\mu'(n) = \frac{1}{n^{\delta'}}$. By
    Theorem~\ref{thm:allmu}, there exists a $\mu'$-cond EP-PRG
    $G'_{\delta',\gamma'} : \bitset^{n} \rightarrow \bitset^{n + \gamma'
      \log n}$; let $\alpha'$ its associated entropy-loss constant.
    Consider a function $G_{\gamma}:\bitset^n \rightarrow \bitset^{n +
      \gamma \log n}$ defined as follows:
    $$G_{\gamma}(s_0||s_1) = s_0||G'_{\delta',\gamma'}(s_1)$$
    where $|s_{1}| = n^{\frac{1}{2c_0}}$.
    Note that $|G'_{\delta',\gamma'}(s_1)| = |s_{1}| + \gamma' \log
    |s_{1}| = n^{\frac{1}{2c_0}} + \gamma' \log (n^{\frac{1}{2c_0}}) =
    n^{-2c_0} + \gamma \log n$, so $G_{\gamma}$ is $(\gamma\log n)$-bit expanding.
    Furthermore, the entropy-loss of $G_{\gamma}$ is $\alpha' \log
    (n^{\frac{1}{2c_0}}) = \alpha \log n$ for some constant $\alpha =
    \frac{\alpha'}{2c_0}$.
    Since the running time of $G'_{\delta',\gamma'}$ is bounded by $(\gamma' + \delta') t_0(n^{\frac{1}{2c_0}}) \leq O(n^{\frac{1}{2}})$, the running time of $G_{\gamma}$ is $|s_{0}| + O(n^{\frac{1}{2}}) \leq n + O(n^{\frac{1}{2}})$.
    Finally, it holds that $\mu'(|s_{1}|) = \mu'(n^{\frac{1}{2c_0}}) =
    \frac{1}{n^2}$, so we conclude that $G_\gamma$ is a rate-1 efficient
    $\mu$-cond EP-PRG for $\mu(n) = \frac{1}{n^2}$, that expand $n$
    bits to $(n + \gamma \log n)$ bits.
\end{proof}

\section{Acknowledgements}
We are very grateful to Eric Allender, Kai-min Chung, Naomi Ephraim, Cody Freitag, Johan H\aa stad, Yuval
Ishai, Ilan Komargodski, Rahul Santhanam, and abhi shelat for extremely helpful comments. We are
also very grateful to the anonymous FOCS reviewers.
\bibliography{crypto}
\bibliographystyle{alpha}
\appendix
\section{Proof of Lemma~\ref{lemma1}}
\label{regular.sec}
In this section we provide a proof of Lemma
\ref{lemma1}. As mentionned in the main body, the proof of this lemma
readily follows using the proofs in \cite{HILL99,Gol01,Yu}, but for the
convenience of the reader, we provide a simple self-contained proof of
the lemma (which may be useful for didactic purposes).
We start by recalling the Leftover Hash Lemma \cite{HILL99} and the
Goldreich-Levin Theorem \cite{GL89}.

\paragraph{The Leftover Hash Lemma}
We recall the notion of a universal hash function \cite{car79}.
\BD
Let ${\cal H}_m^n$ be a family of functions where $m<n$ and each
function $h \in {\cal H}^n_m$ maps $\bitset^n$ to $\bitset^m$. We say that ${\cal H}^n_m$ is a \emph{universal hash family} if (i) the functions $h_{\sigma} \in {\cal H}_m^n$ can be described by a string $\sigma$ of $n^c$ bits where $c$ is a universal constant that does not depend on $n$; (ii) for all $x \neq x' \in \bitset^n$, and for all $y,y' \in \bitset^m$ $$ \Pr[h_\sigma \leftarrow \H^n_m:h_\sigma(x)=y \; \mbox{ and } \; h_\sigma(x')=y'] = 2^{-2m} $$
\ED

\noindent It is well-known that truncation preserves pairwise independence; for completeness, we recall
the proof:
\BL \label{lem:truncate}
If $\H_m^n$ is a universal hash family and $\ell \leq n$, then $\H'^n_{\ell} = \{h_\sigma \in \H^n_m: [h_\sigma]_{\ell}\}$ is also a universal hash family.
\EL
\begin{proof}
    For every $x \neq x' \in \bitset^n, y, y'\in \bitset^\ell$,
    \begin{align*}
        &\Pr[h_\sigma \leftarrow \H^n_m; [h_\sigma(x)]_{\ell} = y\; \mbox{ and } \; [h_\sigma(x')]_{\ell}=y'] \\
        &=\sum_{z\in \bitset^n, [z]_\ell = y}\sum_{z'\in \bitset^n, [z']_\ell = y'} Pr[h_\sigma \leftarrow \H^n_m; h_\sigma(x) = z\; \mbox{ and } \; h_\sigma(x')=z'] \\
        &=2^{-2\ell}.
    \end{align*}
\end{proof}
\noindent Carter and Wegman demonstrate the existence of efficiently computable
universal hash function families.
\BL [\cite{car79}] \label{lem:hashconstruct}
There exists a polynomial-time computable function $H:\bitset^n \times
\bitset^{n^c} \rightarrow \bitset^n$ such that for every $n$, $\H^n_n =
\{h_{\sigma}:\sigma \in \bitset^{n^c}\}$ is a universal hash family,
where $h_\sigma:\{0, 1\}^n \rightarrow \bitset^n$ is defined as $h_\sigma(x) = H(x, \sigma)$.
\EL

\noindent We finally recall the Leftover Hash Lemma.
\BL [Leftover Hash Lemma (LHL) \cite{HILL99}] \label{lemma:LHL}
For any integers $d < k \leq n$, let $\H^n_{k-d}$ be a universal hash family
where each $h \in \H^n_{k-d}$ maps $\bitset^n$ to $\bitset^{k-d}$.
Then, for any random variable $X$ over $\bitset^n$ such that $H_{\infty}(X) \ge k$, it holds that
$$\SD( (H_{k-d}^n , H_{k-d}^n(X)), (H_{k-d}^n , \U_{k-d})) \leq 2^{-\frac{d}{2}},$$
where $H_{k-d}^n$ denotes a random variable uniformly distributed over $\H_{k-d}^n$.
\EL

\paragraph{Hardcore functions and the Goldreich-Levin Theorem}
We recall the notion of a hardcore function and the Goldreich-Levin
Theorem \cite{GL89}.

\BD \label{def:hardcore}
A function $g:\{0, 1\}^n \rightarrow \bitset^{v(n)}$ is called a
\emph{hardcore function for $f:\bitset^n \rightarrow \bitset^*$ over
  $\S = \{S_n \subseteq \bitset^n \}_{n\in N}$} if the following
ensembles are indistinguishable:
\BI
\item $\{x \leftarrow S_n: f(x) || g(x) \}_{n \in \N}$
    \item $\{x \leftarrow S_n: f(x) || \U_{v(n)} \}_{n \in \N}$
\EI
\ED
\noindent While the Goldreich-Levin theorem is typically stated for one-way
functions $f$, it actually applies to any randomized function
$f(x,\U_m)$ of $x$ that \emph{hides} $x$. Note that hiding is a
weaker property than one-wayness (where the attacker is only required
to find \emph{any} pre-image, and not necessarily the pre-image $x$ we computed the
function on).
Such a version of the Goldreich-Levin theorem was explicitly stated in
e.g., ~\cite{HHR06} (using somewhat different terminology).

\BD\label{def:hiding}
A function $f: \{0,1\}^n \times \{0,1\}^{m(n)} \rightarrow \{0,1\}^*$ is said to be \emph{entropically-hiding
  over $\S = \{S_n\}_{n \in \N}$ ($\S$-hiding}) if for every $\PPT$
algorithm $\A$, there exists a negligible function $\mu$ such that for
all $n \in \N$,
	$$ \Pr[x \leftarrow S_n, r \leftarrow \bitset^{m(n)}; A(1^n,f(x,r)) = x ] \leq \mu(n) $$
\ED

\BT [\cite{GL89}, also see Theorem 2.12 in~\cite{HHR06}] \label{thm:GL}
There exists some $c$ such that for every $\gamma$, and every $m(\cdot)$, there exists a polynomial-time computable function $GL:\{0,
1\}^{n+m(n) + n^c} \rightarrow \bitset^{\gamma \log n}$ such that the following holds:
Let $\S = \{S_n \subseteq \bitset^n \}_{n\in N}$ and let $f: \{0,1\}^n \times \{0,1\}^{m(n)} \rightarrow \{0,1\}^*$ be
$\S$-hiding. Then $GL$ is a hardcore function for $f': \{0,1\}^n
\times \{0,1\}^{m(n)} \times \{0,1\}^{n^c} \rightarrow \bitset^{*}$,
defined as $f'(x,r,\sigma) = \sigma || f(x,r)$.
\ET

\noindent Given these preliminaries, we are ready to present the proof of Lemma ~\ref{lemma1}.
\paragraph{Proof of Lemma ~\ref{lemma1}}
Let $\S = \{S_n\}$ be a sequence of sets such that $S_n \subseteq
\{0,1\}^n$, let $s$ be an efficiently computable function such that
$s(n) \leq \log |S_n|$, and let $f:S_n \rightarrow \{0, 1\}^n$ be a $\S$-OWF with
regularity $r(n)$.
By Lemma \ref{lem:hashconstruct} and Lemma \ref{lem:truncate}, there exists some constant $c$ and a polynomial-time computable function $H:\bitset^n \times
\bitset^{n^c} \rightarrow \bitset^n$ such that for every $n, m\geq n$, $\H_m^n =
\{h'_{\sigma}:\sigma \in \bitset^{n^c}\}$ is a universal hash family,
where $h'_\sigma = [h_{\sigma}]_m $ and $h_\sigma(x) = H(x, \sigma)$.
We consider a ``massaged'' function $f_i$, obtained by hashing the
input and the output of $f$: $f_i: S_n \times \bitset^{n^c} \times \bitset^{n^c} \rightarrow \bitset^{2n^c}\times \bitset^{i - \alpha' \log
  n} \times \bitset^{s(n)-i-\alpha' \log n}$
$$f_i(x,\sigma_1,\sigma_2) =
\sigma_1||\sigma_2||[h_{\sigma_1}(x)]_{i - \alpha' \log
  n}||[h_{\sigma_2}(f(x))]_{s(n)-i-\alpha' \log n}$$
where $n = |x|$
and show that the function $\hat{f} (x,(\sigma_1,\sigma_2)) = f_{r(n)}(x,\sigma_1,\sigma_2)$ is $\S$-hiding.
\BCM The function $\hat{f}(\cdot,\cdot)$ is $\S$-hiding.
\ECM
\begin{proof}
Assume for contradiction that there exists a $\PPT$ $A$ and a polynomial $p(\cdot)$
such that for infinitely many $n \in \N$,
    $$\Pr[x \leftarrow S_n, \sigma_1,\sigma_2 \leftarrow
    \bitset^{n^c} :  \A(1^n, f_{r(n)}(x, \sigma_1, \sigma_2)) = x] \ge
    \frac{1}{p(n)}$$
      That is,
     $$\Pr[x \leftarrow S_n, \sigma_1,\sigma_2 \leftarrow
    \bitset^{n^c}: \A(1^n, \sigma_1||\sigma_2||[h_{\sigma}(x)]_{r(n) - \alpha' \log
  n}||[h_{\sigma_2}(f(x))]_{s(n)-r(n)-\alpha' \log n}) = x] \ge
    \frac{1}{p(n)}.$$
       We show how to use $\A$ to invert $f$. Consider the $\PPT$ $\A'(1^n,y)$ that samples
       $\sigma_1,\sigma_2 \leftarrow \{0,1\}^{n^c}$ and a ``guess'' $z \leftarrow \{0,1\}^{r(n) - \alpha' \log
  n}$, and outputs
$\A(1^n,\sigma_1||\sigma_2||z||[h_{\sigma_2}(y)]_{s(n)-r(n)-\alpha' \log
  n})$.
Since the guess is
correct with probability $2^{-r(n) + \alpha' \log n} \geq 2^{-r(n)}$,
we have that
     $$\Pr[x \leftarrow S_n : \A'(1^n, f(x)) = x] \ge \frac{2^{-r(n)}}{p(n)}.$$
Since the any $y \in f(S_n)$ has at least $2^{r(n)-1}$ pre-images (since $f$ is $r(n)$-regular over $\S$), we have that
$$\Pr[x \leftarrow S_n : \A'(1^n, f(x)) = x] \ge \Pr[x \leftarrow S_n :
\A'(1^n, f(x)) \in f^{-1}(f(x))] \times 2^{-r(n)+1}.$$
Thus,
$$\Pr[x \leftarrow S_n : \A'(1^n, f(x)) \in f^{-1}(f(x))] \ge 2^{-r(n)+1}
\times \Pr[x \leftarrow S_n : \A'(1^n, f(x)) = x] \geq
\frac{1}{2p(n)}$$
which contradicts that $f$ is an $\S$-OWF.
\end{proof}

Next, consider $f'_i (s,\sigma_1,\sigma_2,\sigma_{GL}) =
\sigma_{GL}||f_i(s,\sigma_1,\sigma_2)$, and the hardcore function $GL$
guaranteed to exists by Theorem \ref{thm:GL}. Since $\hat{f}$ is $\S$-hiding, by Theorem \ref{thm:GL}, the following ensembles are
indistinguishable:
\BI
\item $\{x \leftarrow S_n, \sigma_1, \sigma_2,
  \sigma_{GL} \leftarrow \{0,1\}^{n^c}:
  f'_{r(n)}(x,\sigma_1,\sigma_2,\sigma_{GL}) ||
  GL(x,(\sigma_1,\sigma_2),\sigma_{GL}) \}_{n \in \N}$

\item $\{x \leftarrow S_n, \sigma_1, \sigma_2,
  \sigma_{GL} \leftarrow \{0,1\}^{n^c}:
  f'_{r(n)}(x,\sigma_1,\sigma_2,\sigma_{GL}) || \U_{\gamma' \log n}
  \}_{n \in \N}$
  \EI
  We finally show that $\{x \leftarrow S_n, \sigma_1, \sigma_2,
  \sigma_{GL} \leftarrow \{0,1\}^{n^c}:
  f'_{r(n)}(x,\sigma_1,\sigma_2,\sigma_{GL}) \}$ is
  $\frac{3}{n^{\alpha'/2}}$ close to uniform for every $n$, which will conclude the
    proof of both the pseudorandomness and the density properties by a
    hybrid argument. Let
    $X$ be a random variable uniformly distributed over $\S_n$, and
    let $R_1,R_2$,$R_{GL}$ be random variables uniformly distributed over
    $\{0,1\}^{n^c}$.
    Let
    $$\mathtt{REAL} = f'_{r(n)}(X,R_1,R_2,R_{GL}) = R_{GL}||R_1||R_2|| [h_{R_1}(X)]_{r(n) - \alpha' \log n}, [h_{R_2}(f(X))]_{s(n)-r(n) - \alpha' \log
      n}$$
    We observe:
    \BI
  \item For every $y \in f(S_n)$, $H_{\infty}(X | f(X) = y) \geq r(n)-1$ due to the fact that $f$
    is $r(n)$-regular; by the LHL (i.e., Lemma \ref{lemma:LHL}), it
    follows that $\mathtt{REAL}$ is $\frac{2}{n^{\alpha'/2}}$ close in
      statistical distance to
      $$\mathtt{HYB}_1 = R_{GL}||R_1||R_2|| \U_{r(n) - \alpha' \log n}|| [h_{R_2}(f(X))]_{s(n)-r(n) - \alpha' \log
      n}$$

\item $H_{\infty}(f(X)) \geq s(n)- r(n)$ due to the fact that $f$
    is $r(n)$-regular and $|S_n| \ge s(n)$; by the LHL, it
    follows that $\mathtt{HYB}_1$ is
    $\frac{1}{n^{\alpha'/2}}$ close in
      statistical distance to
      $$\mathtt{HYB}_2 = R_{GL}||R_1||R_2|| \U_{r(n) - \alpha' \log n}|| \U_{s(n)-r(n) - \alpha' \log
      n} = \U_{s(n)+3n^c-2\alpha' \log n}$$
    \EI
Thus, $\mathtt{REAL}$ is $\frac{3}{n^{\alpha'/2}}$-close to uniform,
    which concludes the proof.
\end{document}